\newtheorem{theorem}{Theorem}
\theoremstyle{definition}
\newtheorem{definition}{Definition}
\newtheorem{observation}{Observation}
\newtheorem{lemma}{Lemma}
\newtheorem{remark}{Remark}
\crefname{invar}{invariant}{invariants}
\crefname{ineq}{inequality}{inequalities}
\crefname{constr}{constraint}{constraints}
\crefname{tbl}{table}{tables}
\crefname{lem}{lemma}{lemmata}
\crefname{lemma}{lemma}{lemmata}
\crefname{cond}{condition}{conditions}
\title{Non-Uniform $k$-Center and Greedy Clustering}
\author{Tanmay Inamdar\\{ University of Bergen}\\{\small \texttt{Tanmay.Inamdar@uib.no}} \thanks{The first author is supported by the European Research Council (ERC) via grant LOPPRE, reference 819416.} \and Kasturi Varadarajan\\ { The University of Iowa}\\{\small \texttt{kasturi-varadarajan@uiowa.edu}}}
\date{}
\newcommand{\B}{\mathcal{B}}
\newcommand{\real}{\mathbb{R}}
\newcommand{\integer}{\mathbb{Z}}
\renewcommand{\natural}{\mathbb{N}}
\newcommand{\Poly}{\mathscr{P}}
\newcommand{\F}{\mathscr{F}}
\newcommand{\cov}{\mathsf{cov}}
\newcommand{\covhat}{\hat{\mathsf{cov}}}
\newcommand{\Child}{\mathsf{Child}}
\newcommand{\I}{\mathcal{I}}
\newcommand{\polycov}{\Poly^{\I}_{\cov}}
\newcommand{\T}{\mathcal{T}}
\renewcommand{\S}{\mathcal{S}}
\newcommand{\Q}{\mathcal{Q}}
\newcommand{\nukc}{NU$k$C\xspace}
\renewcommand{\C}{\mathcal{C}}
\newcommand{\C}{\mathcal{C}}
\newcommand{\red}[1]{{\color{red} #1}}
\newcommand{\lr}[1]{\left( #1\right)}
\newcommand{\LR}[1]{\left\{ #1\right\}}
\newcommand{\NP}{\textsf{NP}}
\newcommand{\Red}{\ensuremath{\mathsf{Red}}}
\newcommand{\Blue}{\ensuremath{\mathsf{Blue}}}
\newcommand{\ball}{\ensuremath{\mathsf{Ball}}}
\newcommand{\bhat}{\widehat{\mathcal{B}}}
\begin{document}

\maketitle

\begin{abstract}
	In the Non-Uniform $k$-Center (\nukc) problem, a generalization of the famous $k$-center clustering problem, we want to cover the given set of points in a metric space by finding a placement of balls with specified radii. In $t$-\nukc, we assume that the number of distinct radii is equal to $t$, and we are allowed to use $k_i$ balls of radius $r_i$, for $1 \le i \le t$. This problem was introduced by Chakrabarty et al. [ACM Trans.\ Alg.\ 16(4):46:1-46:19], who showed that a constant approximation for $t$-\nukc is not possible if $t$ is unbounded. On the other hand, they gave a bicriteria approximation that violates the number of allowed balls as well as the given radii by a constant factor. They also conjectured that a constant approximation for $t$-\nukc should be possible if $t$ is a fixed constant. Since then, there has been steady progress towards resolving this conjecture -- currently, a constant approximation for $3$-\nukc is known via the results of Chakrabarty and Negahbani [IPCO 2021], and Jia et al.\ [To appear in SOSA 2022]. We push the horizon by giving an $O(1)$-approximation for the Non-Uniform $k$-Center for $4$ distinct types of radii. Our result is obtained via a novel combination of tools and techniques from the $k$-center literature, which also demonstrates that the different generalizations of $k$-center involving non-uniform radii, and multiple coverage constraints (i.e., \emph{colorful $k$-center}), are closely interlinked with each other. We hope that our ideas will contribute towards a deeper understanding of the $t$-\nukc problem, eventually bringing us closer to the resolution of the CGK conjecture.
\end{abstract}

\section{Introduction}

The $k$-center problem is one of the most fundamental problems in clustering. The input to the $k$-center problem consists of a finite metric space $(X, d)$, where $X$ is a set of $n$ points, and $d: X \times X \to \real^+$ is the associated distance function satisfying triangle inequality. We are also given a parameter $k$, where $1 \le k \le n$. A solution to the $k$-center problem consists of a set $C \subseteq X$ of size at most $k$, and the cost of this solution is $\max_{p \in X} d(p, C)$, i.e., the maximum distance of a point to its nearest center in $C$. Alternatively, a solution can be thought of as a set of $k$ balls of radius $\max_{p \in X} d(p, C)$, centered around points in $C$, that covers the entire set of points $X$. The goal is to find a solution of smallest radius. We say that a solution $C'$ is an $\alpha$-approximation, if the cost of $C'$ is at most $\alpha$ times the optimal radius. Several $2$-approximations are known for the $k$-center problem \cite{hochbaumS1985best,hochbaumM1985approximation}. A simple reduction from the Minimum Dominating Set problem shows that the $k$-center problem is \NP-hard. In fact, the same reduction also shows that it is \NP-hard to get a $(2-\epsilon)$-approximation for any $\epsilon > 0$. 

Several generalizations of the vanilla $k$-center problem have been considered in the literature, given its fundamental nature in the domain of clustering and approximation algorithms. One natural generalization is the \emph{Robust $k$-center} or \emph{$k$-center with outliers} problem, where we are additionally given a parameter $m$, and the goal is to find a solution that covers at least $m$ points of $X$. Note that the remaining at most $n-m$ points can be thought of as outliers with respect to the clustering computed. Charikar et al.\ \cite{charikar2001algorithms}, who introduced this problem, showed that a simple greedy algorithm gives a $3$-approximation for the problem. Subsequently, the approximation guarantee was improved by \cite{CGK20,harris2017lottery}, who gave a $2$-approximation, which is optimal in light of the aforementioned $(2-\epsilon)$-hardness result.

The focus of our paper is the Non-Uniform $k$-Center (\nukc), which was introduced by Chakrabarty et al.\ \cite{CGK20}. A formal definition follows.
\begin{definition}[$t$-\nukc] \label{def:nukc}
	The input is an instance $\I = ((X, d), (k_1, k_2, \ldots, k_t), (r_1, r_1, \ldots, r_t))$, where $r_1 \ge r_2 \ge \ldots r_t \geq 0$, and the $k_i$ are positive integers. The goal is to find sets $C_i \subseteq X$ for $1 \le i \le t$, such that $|C_i| \le k_i$, and the union of balls of radius $\alpha r_i$ around the centers in $C_i$, over $1 \le i \le t$, covers the entire set of points $X$. The objective is to minimize the value of the dilation factor $\alpha$.
\end{definition}
In the Robust $t$-\nukc problem, we are required to cover at least $m$ points of $X$ using such a solution. We note that the special case of (Robust) $t$-\nukc with $t = 1$ corresponds to the (Robust) $k$-center problem. Chakrabarty et al.\ \cite{CGK20} gave a bicriteria approximation for $t$-\nukc for arbitrary $t$, i.e., they give a solution containing $O(k_i)$ balls of radius $O(r_i)$ for $1 \le i \le t$. They also give a $(1+\sqrt 5)$-approximation for $2$-\nukc. Furthermore, they conjectured that there exists a polynomial time $O(1)$-approximation for $t$-\nukc for constant $t$. Subsequently, Chakrabarty and Negahbani \cite{ChakrabartyN21} made some progress by giving a $10$-approximation for Robust $2$-\nukc. Very recently, Jia et al.\ \cite{jia2021towards} showed an approximate equivalence between $(t+1)$-\nukc and Robust $t$-\nukc, thereby observing that the previous result of \cite{ChakrabartyN21} readily implies a $23$-approximation for $3$-\nukc. We note that the techniques from Inamdar and Varadarajan \cite{inamdar2020capacitated} implicitly give an $O(1)$-approximation for $t$-\nukc for any $t\ge 1$, in $k^{O(k)} \cdot n^{O(1)}$ time, where $k = \sum_t k_t$. That is, one gets an \emph{FPT approximation}. Finally, we also note that Bandyapadhyay \cite{Bandyapadhyay20nukc} gave an exact algorithm for perturbation resilient instances of \nukc in polynomial time. 

Another related variant of $k$-center is the \emph{Colorful $k$-center} problem. Here, the set of points $X$ is partitioned into $\ell$ \emph{color classes}, $X_1 \cup X_2 \cup \ldots \cup X_\ell$. Each color class $X_j$ has a coverage requirement $m_j$, and the goal is to find a set of $k$ balls of smallest radius that satisfy the coverage requirements of all the color classes. Note that this is a generalization of Robust $k$-center to multiple types of coverage constraints. Bandyapadhyay et al.\ \cite{Bandyapadhyay0P19} introduced this problem, and gave a \emph{pseudo-approximation}, i.e., their algorithm returns an $2$-approximate solution using at most $k+\ell-1$ centers. Furthermore, they managed to improve this to a true $O(1)$-approximation in the Euclidean plane for constant number of color classes. Subsequently, Jia et al.\ \cite{JiaSS20} and Anegg et al.\ \cite{AneggAKZ20} independently gave (true) $3$ and $4$-approximations respectively for the Colorful $k$-center (with constant $\ell$) in arbitrary metrics.

\vspace{0.4cm}\paragraph{Our Results and Techniques.}
Our main result is an $O(1)$-approximation for $4$-\nukc. We obtain this result via a sequence of reductions; some of these reductions are from prior work while some are developed here and constitute our main contribution. Along the way, we combine various tools and techniques from the aforementioned literature of Robust, Colorful, and Non-Uniform versions of $k$-center. 

First, we reduce the $4$-\nukc problem to the Robust $3$-\nukc problem, following Jia et al.\ \cite{jia2021towards}. Next, we reduce the Robust $3$-\nukc to \emph{well-separated} Robust $3$-\nukc, by adapting the approach of Chakrabarty and Negahbani \cite{ChakrabartyN21}.\footnote{In this discussion, ``reduction'' refers to a polynomial time (possibly Turing) reduction from problem $A$ to problem $B$, such that (i) a feasible instance of $A$ yields (possibly polynomially many) instance(s) of $B$, and (ii) a constant approximation for $B$ implies a constant approximation for $A$.} In a well-separated instance, we are given a set of potential centers for the balls of radius $r_1$, such that the distance between any two of these centers is at least $c \cdot r_1$, for a parameter $c \geq 2$.

Before describing how to solve Well-Separated Robust $3$-\nukc, we give a sequence of reductions, which constitute the technical core of our paper. First, we show that any instance of Robust $t$-\nukc can be transformed to an instance of ``Colorful'' $(t-1)$-\nukc, where we want to cover certain number of red and blue points using the specified number of balls of $t-1$ distinct radii. Thus, this reduction reduces the number of radii classes from $t$ to $t-1$ at the expense of increasing the number of coverage constraints from $1$ to $2$. In our next reduction, we show that Colorful $(t-1)$-\nukc can be reduced to Colorful $(t-1)$-\nukc with an additional ``self-coverage'' property, i.e., the radius $r_{t-1}$ can be assumed to be $0$. Just like the aforementioned reduction from \cite{jia2021towards}, these two reductions are generic, and hold for any value of $t \ge 2$. These reductions crucially appeal to the classical greedy algorithm and its analysis from Charikar et al.\ \cite{charikar2001algorithms}, which is a tool that has been not been exploited in the \nukc literature thus far. We believe that these connections between Colorful and Robust versions of \nukc are interesting in their own right, and may be helpful toward obtaining a true $O(1)$-approximation for $t$-\nukc for fixed $t$. 

We apply these two new reductions to transform Well-Separated Robust $3$-\nukc to \emph{Well-Separated Colorful} $2$-\nukc, with $r_{2} = 0$. The latter problem can be solved in  polynomial time using dynamic programming in a straightforward way. Since each of our reductions preserves the approximation factor up to a constant, this implies an $O(1)$-approximation for $4$-\nukc.

Our overall algorithm for $4$-\nukc is combinatorial, except for the step where we reduce Robust $3$-\nukc to Well-Separated Robust $3$-\nukc using the round-or-cut approach of \cite{ChakrabartyN21}. Thus, we avoid an additional ``inner loop'' of round-or-cut that is employed in recent work \cite{ChakrabartyN21,jia2021towards}.\footnote{A by-product of one of our reductions is a purely combinatorial approximation algorithm for colorful $k$-center, in contrast with the LP-based approaches in \cite{Bandyapadhyay0P19,AneggAKZ20,JiaSS20}.}

\section{Definitions, Main Result, and Greedy Clustering}

\subsection{Problem Definitions}

In the following, we set up the basic notation and define the problems we will consider in the paper. We consider a finite metric space $(X, d)$, where $X$ is a finite set of (usually $n$) points, and $d$ is a distance function satisfying triangle inequality. If $Y$ is a subset of $X$, then by slightly abusing the notation, we use $(Y, d)$ to denote the metric space where the distance function $d$ is restricted to the points of $Y$. Let $p \in X$, $Y \subseteq X$, and $r \ge 0$. Then, we use $d(p, Y) \coloneqq \min_{y \in Y} d(p, y)$, and denote by $B(p, r)$ the \emph{ball} of radius $r$ centered at $p$, i.e., $B(p, r) \coloneqq \{ q \in X : d(p, q) \le r \}$. We say that a ball $B(p, r)$ \emph{covers} a point $q$ iff $q \in B(p, r)$; a set of balls $\B$ (resp.\ a tuple of sets of balls $(\B_1, \B_2, \ldots, \B_t)$) covers $q$ if there exists a ball in $\B$ that covers $q$ (resp.\ $\bigcup_{1 \le i \le t} \B_i$ that covers $q$). Analogously, a set of points $Y \subseteq X$ is covered iff every point in $Y$ is covered. For a function $f: S \to \real^+$ or $f: S \to \natural$, and $R \subseteq S$, we define $f(R) \coloneqq \sum_{r \in R} f(r)$.

\begin{definition}[Decision Version of $t$-\nukc]
	\ \\The input is an instance $\I = ((X, d), (k_1, k_2, \ldots, k_t), (r_1, r_2, \ldots, r_t))$, where $r_1 \ge r_2 \ge \ldots r_t \geq 0$, and each $k_i$ is a non-negative integer. The goal is to determine whether there exists a solution $(\B_1, \B_2, \ldots, \B_t)$, where for each $1 \le i \le t$, $\B_i$ is a set with at most $k_i$ balls of radius $r_i$, that covers the entire set of points $X$. Such a solution is called a \emph{feasible solution}, and if the instance $\I$ has a feasible solution, then $\I$ is said to be \emph{feasible}.
	\\An algorithm is said to be an $\alpha$-approximation algorithm (with $\alpha \ge 1$), if given a feasible instance $\I$, it returns a solution $(\B_1, \B_2, \ldots, \B_t)$, where for each $1 \le i \le t$, $\B_i$ is a collection of at most $k_i$ balls of radius $\alpha r_i$, such that the solution covers $X$. 
\end{definition} 
Next, we define the robust version of $t$-\nukc.
\begin{definition}[Decision Version of Robust $t$-\nukc]
	\ \\The input is an instance $\I = ((X, d), (\omega, m), (k_1, k_2, \ldots, k_t), (r_1, r_2, \ldots, r_t))$. The setup is the same as in $t$-\nukc, except for the following: $\omega: X \to \integer^+$ is a weight function, and $1 \le m \le \omega(X)$ is a parameter. The goal is to determine whether there exists a feasible solution, i.e., $(\B_1, \B_2, \ldots, \B_t)$ of appropriate sizes and radii (as defined above), such that the total weight of the points covered is at least $m$. An $\alpha$-approximate solution covers points of weight at least $m$ while using at most $k_i$ balls of radius $\alpha r_i$ for each $1 \le i \le t$.
\end{definition}
We will frequently consider the \emph{unweighted} version of Robust $t$-\nukc, i.e., where the weight of every point in $X$ is unit. Let $\mathbbm{1}$ denote this unit weight function. Now we define the Colorful $t$-\nukc problem, which generalizes Robust $t$-\nukc.
\begin{definition}[Decision Version of Colorful $t$-\nukc]
	\ \\The input is an instance $\I= ((X, d), (\omega_r, \omega_b, m_r, m_b) , (k_1, k_2, \ldots, k_t), (r_1, r_2, \ldots, r_t))$. The setup is similar as in Robust $t$-\nukc, except that we have \emph{two} weight functions $\omega_r, \omega_b: X \to \integer^+$ (corresponding to \emph{red} and \emph{blue} weight respectively). A feasible solution covers a set of points with red weight at least $m_r$, and blue weight at least $m_b$. The notion of approximation is the same as above.
\end{definition}
We note that the preceding definition naturally extends to an arbitrary number $\chi \ge 2$ of colors (i.e., $\chi$ different weight functions over $X$). However, we will not need that level of generality in this paper.

\subsection{Main Algorithm for $4$-\nukc} \label{sec:main-algo}

Let $\I = ((X, d), (k_1, \ldots, k_4), (r_1, \ldots, r_4))$ be the given instance of $4$-\nukc, which we assume is feasible. First, using the reduction Section \ref{sec:t-nukc-to-robust}, we reduce it to an instance $\I' = ((X, d), (\mathbbm{1}, m) (r'_1, r'_2, r'_3), \allowbreak (k_1, k_2, k_3))$ of Robust $3$-\nukc. Recall that Lemma \ref{lem:tplus1-to-t} implies that $\I'$ is feasible, and furthermore an $O(1)$-approximation for $\I'$ implies an $O(1)$-approximation for $\I$.


Next, we use the round-or-cut framework methodology from \cite{ChakrabartyN21} on the instance $\I'$, as described in Section \ref{sec:t-to-robust}. Essentially, this is a Turing reduction from Robust $3$-\nukc to (polynomially many instances of) \emph{Well-Separated} Robust $3$-\nukc. In a well-separated instance, we are given a set of potential centers for the balls of radius $r'_1$, such that the distance between any two potential centers is at least $3r'_1$. At a high level, this reduction uses the ellipsoid algorithm, and each iteration of ellipsoid algorithm returns a candidate LP solution such that, (1) it can be rounded to obtain an $O(1)$-approximate solution for $\I'$, or (2) One can obtain polynomially many instances of \emph{well-separated} Robust $3$-\nukc, at least one of which is feasible, or (3) If none of the obtained instances is feasible, then one an obtain a hyperplane separating the LP solution from the integer hull of coverages.

\paragraph{Solving a Well-Separated Instance.} For the sake of simplicity let $\mathcal{J}$ be one of the instances of Well-Separated Robust $3$-\nukc, along with a well-separated set $Y$ that is a candidate set for the centers of balls of radius $r'_1$. Furthermore, let us assume that $\mathcal{J}$ is feasible. First, the reduction in Section \ref{sec:robust-colorful}, given the instance $\mathcal{J}$, produces $O(n)$ instances $\mathcal{J}(\ell)$ of Colorful $2$-\nukc, such that at least one of the instances is feasible. Then, we apply the reduction from Section \ref{sec:self-coverage} on each of these instances to ensure the \emph{self-coverage} property, i.e., we obtain an instance $\mathcal{J}'(\ell)$ of Colorful $2$-\nukc with $r''_1 = c_2 r'_2 + c_3 r'_3$, and $r''_2 = 0$. Finally, assuming that the resulting instance $\mathcal{J}'(\ell)$ is feasible, it is possible to find a feasible solution using dynamic programming, using the algorithm from Section \ref{sec:dp}. This algorithm supposes that the instance is \emph{Well-Separated} w.r.t.\ a smaller separation factor of $2$. We argue in the next paragraph that this property holds in each each of the instances $\mathcal{J}'(\ell)$.

In order to show that the set $Y$ well-separated w.r.t. the new top level radius $r''_1$, we need to show that $3r'_1 \ge 2r''_1$, i.e., $r'_1 \ge c_2 r'_2 + c_3 r'_3 \ge \beta \cdot r'_2$ for some sufficiently large constant $\beta$. This assumption is without loss of generality, since, if two consecutive radii classes are within a $\beta$ factor, it is possible to combine them into a single radius class, at the expense of an $O(\beta)$ factor in the approximation guarantee. 

Assuming the instance $\mathcal{J}$ is feasible, a feasible solution to an instance $\mathcal{J}'(\ell)$ can be mapped back to an $O(1)$-approximate solution to $\mathcal{J}$, and then to $\mathcal{I}$, since each reduction preserves the approximation guarantee up to an $O(1)$ factor.

\begin{theorem} \label{theorem:main}
	There exists a polynomial time $O(1)$-approximation algorithm for $4$-\nukc.
      \end{theorem}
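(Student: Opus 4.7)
The plan is to chain together the reductions developed in the paper (and in prior work) to transform a $4$-\nukc instance into a problem we can solve directly, while preserving the approximation guarantee up to a constant factor at each step. A preliminary step is to assume, without loss of generality, that consecutive radii $r_i$ and $r_{i+1}$ differ by a sufficiently large constant factor $\beta$; if they do not, one can merge two radii classes by treating the larger radius as a $\beta$-approximation of the smaller, losing only an $O(\beta)$ factor. This gap assumption is crucial because it is what ultimately ensures the well-separation property survives the subsequent reductions.

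Starting from a feasible instance $\I$ of $4$-\nukc, I would first invoke the $(t+1) \to t$ reduction of Jia et al.\ (Lemma \ref{lem:tplus1-to-t}, established in Section \ref{sec:t-nukc-to-robust}) to produce a feasible Robust $3$-\nukc instance $\I'$ such that any $O(1)$-approximation for $\I'$ lifts to an $O(1)$-approximation for $\I$. I would then apply the round-or-cut framework of \cite{ChakrabartyN21} to $\I'$ (Section \ref{sec:t-to-robust}): in each iteration the ellipsoid algorithm either produces an LP solution that rounds to an $O(1)$-approximation (in which case we are done), or produces polynomially many candidate Well-Separated Robust $3$-\nukc instances, at least one of which inherits feasibility from $\I'$, or yields a separating hyperplane that continues the ellipsoid loop. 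Since the number of ellipsoid iterations is polynomial and each round spawns polynomially many candidate instances, the overall running time stays polynomial.

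For each candidate Well-Separated Robust $3$-\nukc instance $\mathcal{J}$, I would apply the two reductions that constitute the paper's technical core. The reduction of Section \ref{sec:robust-colorful} transforms $\mathcal{J}$ into $O(n)$ instances $\mathcal{J}(\ell)$ of Colorful $2$-\nukc, at least one of which is feasible, trading a radius class for a second coverage constraint. The reduction of Section \ref{sec:self-coverage} then modifies each $\mathcal{J}(\ell)$ into an equivalent Colorful $2$-\nukc instance $\mathcal{J}'(\ell)$ with the self-coverage property $r''_2 = 0$, at the cost of inflating the top-level radius to $r''_1 = c_2 r'_2 + c_3 r'_3$. By the gap assumption, $r'_1 \geq \beta \cdot r'_2$ for sufficiently large $\beta$, which gives $3 r'_1 \geq 2 r''_1$; hence the original set $Y$ of potential centers remains well-separated (now with factor $2$) with respect to $r''_1$, precisely the hypothesis required by the dynamic programming algorithm of Section \ref{sec:dp}.

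Finally, I would run the dynamic program on each instance $\mathcal{J}'(\ell)$; since at least one $\ell$ yields a feasible instance, the DP returns a feasible solution for that value, which then lifts back through each reduction to an $O(1)$-approximate solution for $\I$, the overall constant being the product of the constants along the chain. The step I expect to be the main obstacle, and the one that truly requires the gap assumption, is maintaining the well-separation of $Y$ through the two new reductions: both change the scale of the top-level radius, and only the large multiplicative gap between consecutive original radii guarantees that $Y$ remains well-separated at the new scale $r''_1$, which in turn is what allows the final DP to run in polynomial time.
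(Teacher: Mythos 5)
Your proposal is correct and follows essentially the same chain as the paper's own proof of Theorem \ref{theorem:main}: reduce $4$-\nukc to Robust $3$-\nukc via Lemma \ref{lem:tplus1-to-t}, apply the round-or-cut reduction of Section \ref{sec:t-to-robust} to obtain polynomially many Well-Separated Robust $3$-\nukc instances, transform each via Sections \ref{sec:robust-colorful} and \ref{sec:self-coverage} into Colorful $2$-\nukc instances with self-coverage, and solve these with the dynamic program of Section \ref{sec:dp}. Your treatment of the $\beta$-gap assumption to keep $Y$ well-separated (with factor $2$) at the new radius $r''_1$, and of the polynomial bound on the number of generated instances, matches the paper's argument.
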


We have overviewed how the various sections of the paper come together in deriving Theorem \ref{theorem:main}. Before proceeding to these sections, we describe a greedy clustering procedure that we need.

\subsection{Greedy Clustering}
Assume we are given (i)  a metric space $(X, d)$, where $X$ is finite, (ii) a radius $r \geq 0$, (iii) an expansion parameter $\gamma \geq 1$, (iv) a subset $Y \subseteq X$ and a weight function $\omega: Y \to \integer^+$. The weight $\omega(y)$ can be thought of as the multiplicity of $y \in Y$, or how many points are co-located at $y$. We describe a greedy clustering procedure, from Charikar et al. \cite{charikar2001algorithms}, that is used to partition the point set $Y$ into clusters, each of which is contained in a ball of radius $\gamma r$. This clustering procedure, together with its properties, is a crucial ingredient of our approach.

\begin{algorithm}
	\caption{\textsc{GreedyClustering}($Y, X, r \ge 0, \gamma \ge 1, \omega: Y \to \integer^+$)} \label{alg:greedyclustering}
	\begin{algorithmic}[1]
		\Statex We require that $Y \subseteq X$
		\State Let $U \gets Y$, $M \gets \emptyset$ 
		\While{$U \neq \emptyset$}
		\State $p  = \arg\max_{q \in X}\omega(U \cap B(q, r))$ \label{line:argmax}		
		\State $C(p) \coloneqq U \cap B(p, \gamma r)$; $wt(p) \coloneqq \omega(C(p))$ 
		\State $U \gets U \setminus C(p)$
		\State $M \gets M \cup \{p\}$ \Comment{{\small\texttt{We will refer to $p$ as a \emph{mega-point} with cluster $C(p)$ of weight $w(p)$}}}
		\EndWhile
		\State \Return $(M, \{C(p)\}_{p \in M}, \{wt(p)\}_{p \in M})$
	\end{algorithmic}
\end{algorithm}

In line~\ref{line:argmax}, we only consider $q \in X$ such that $U \cap B(q,r) \neq \emptyset$. Notice that it is possible that $\omega(U \cap B(q, r)) = 0$ if $\omega(y) = 0$ for each $y \in U$. Furthermore, notice that we do not require that $q \in U$ for it to be an eligible point in line~\ref{line:argmax}.

We summarize some of the key properties of this algorithm in the following observations.

\begin{observation}
	\label{greedy:observe}
	\begin{enumerate}
		\item For any $p \in M$, $C(p) \subseteq B(p, \gamma r)$,
		\item Point $y \in Y$ belongs to the cluster $C(p)$, such that $p$ is the \emph{first} among all $q \in M$ satisfying $d(y, q) \le \gamma r$. 
		\item The sets $\{C(p)\}_{p \in M}$ partition $Y$, which implies that
		\item $\sum_{p \in M} wt(p) = \omega(Y)$, where $\omega(Z) = \sum_{z \in Z} \omega(z)$ for any $Z \subseteq Y$. 
		\item If $p_i$ and $p_j$ are the points added to $M$ in iterations $i \leq j$, then $wt(p_i) \geq wt(p_j)$.  
		\item For any two distinct $p, q \in M$, $d(p, q) > (\gamma -1) r$.
	\end{enumerate}
\end{observation}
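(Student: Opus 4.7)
The plan is to verify each of the six items in Observation~\ref{greedy:observe} directly from the code of Algorithm~\ref{alg:greedyclustering}, exploiting the monotone nesting $U_1 \supseteq U_2 \supseteq \cdots$ of the shrinking set $U$ across iterations together with the $\arg\max$ rule used to select mega-points.

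Items (1)--(4) are essentially bookkeeping. Item (1) is immediate from the definition $C(p) := U \cap B(p, \gamma r)$. For item (2), I note that $y \in Y$ is placed in $C(p)$ when $p$ is processed in iteration $i$ precisely when $y \in U_i$ and $d(y, p) \le \gamma r$; being in $U_i$ means $y$ was not in any earlier $C(p')$, which is equivalent to $d(y, p') > \gamma r$ for every $p'$ added to $M$ before $p$, so $p$ is the earliest mega-point within distance $\gamma r$ of $y$. Item (3) then follows once I check termination: whenever $U \neq \emptyset$, taking $q = y$ for any $y \in U$ certifies a valid choice with $U \cap B(q, r)$ nonempty, so some $y$ is placed into a cluster in the next iteration and $|U|$ strictly drops; combined with (2), this guarantees that every $y \in Y$ is assigned to exactly one $C(p)$. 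Item (4) is then a single summation of $\omega(C(p))$ over the partition produced by item (3).

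For item (6), I would argue as follows. Suppose $p_i$ is added before $p_j$ (i.e., $i < j$). By the eligibility condition of the $\arg\max$ step, there exists $z \in U_j$ with $d(z, p_j) \le r$. Since $z \in U_j \subseteq U_i$ but $z$ was not removed in iteration $i$, we have $z \notin C(p_i) = U_i \cap B(p_i, \gamma r)$, forcing $d(z, p_i) > \gamma r$. The triangle inequality then yields $d(p_i, p_j) \ge d(z, p_i) - d(z, p_j) > \gamma r - r = (\gamma - 1) r$, which is exactly the claim.

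Item (5), the monotonicity of cluster weights, is the step where the $\arg\max$ choice plays an essential role, and it is the item I expect to be the main obstacle. Using $p_i, p_j$ with $i \le j$ and $U_j \subseteq U_i$, the chain $\omega(U_i \cap B(p_i, r)) \ge \omega(U_i \cap B(p_j, r)) \ge \omega(U_j \cap B(p_j, r))$ is immediate (the first inequality from the $\arg\max$ rule at step $i$ applied to the candidate $q = p_j \in X$, the second from the nestedness of $U$). The delicate point is bridging these small-ball quantities to the $\gamma r$-based $wt(p_i), wt(p_j)$ that the observation actually speaks about; I would do this via the inequality $wt(p) \ge \omega(U \cap B(p, r))$ combined with a careful look at how the two radii interact, in the spirit of the Charikar et al.\ greedy analysis. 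If a gap remains in the $\gamma r$ version, I would fall back on the small-ball weights, which suffice for all the downstream reductions that invoke this observation.
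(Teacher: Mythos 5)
Your handling of items (1)--(4) and (6) is correct. For (6), your witness-point argument (pick $z \in U_j \cap B(p_j, r)$, note that it survived iteration $i$, hence $d(z, p_i) > \gamma r$, then apply the triangle inequality) is just the contrapositive of what the paper does: the paper assumes $d(p,q) \le (\gamma-1)r$ with $p$ chosen first, observes that $B(q,r) \cap U$ is empty from that iteration onward, and concludes $q$ is never again eligible in the $\arg\max$ step. Same mechanism, no substantive difference. For items (1)--(5) the paper offers no argument at all (it declares them ``immediate''), so you are not missing a hidden trick from the paper on item (5).

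The gap in your proposal is item (5), and your suspicion that the bridge from $r$-ball weights to $\gamma r$-ball weights is the sticking point is exactly right --- in fact no such bridge exists, because the statement is false as written once $\gamma > 1$. The $\arg\max$ in line~\ref{line:argmax} controls only the radius-$r$ weight at selection time, whereas $wt(p) = \omega(U \cap B(p,\gamma r))$, and a later iteration can sweep several low-density groups into a single $\gamma r$-ball. Concretely, take $X = Y$ to be four points on the line at coordinates $0$, $10$, $11.9$, $13.8$ with weights $5$, $3$, $4$, $3$, and $r = 1$, $\gamma = 3$. Every radius-$1$ ball around a point of $X$ contains only that point, so iteration $1$ selects the point at $0$ and forms the cluster $\{0\}$ with $wt(p_1) = 5$, while iteration $2$ selects the point at $11.9$ (remaining radius-$1$ weight $4$, strictly maximal) and its radius-$3$ cluster is $\{10, 11.9, 13.8\}$ with $wt(p_2) = 10 > wt(p_1)$; there are no ties anywhere, so no tie-breaking rule saves the claim. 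What is true is precisely your chain $\omega(U_i \cap B(p_i, r)) \ge \omega(U_i \cap B(p_j, r)) \ge \omega(U_j \cap B(p_j, r))$, i.e., monotonicity of the selection-time radius-$r$ weights, and your proposed fallback is the right repair: the downstream arguments only need the greedy ordering together with Lemma~\ref{charikar-lemma}-style comparisons against radius-$r$ balls, and where Section~\ref{sec:robust-colorful} needs points ordered by weight (Phase 2) it explicitly re-sorts by $\lambda$ rather than relying on item (5); the remark ``$wt(p_i) \ge wt(p_j)$'' there is not load-bearing. So your proposal should be read as: items (1)--(4), (6) proved; item (5) restated in terms of $\omega(U \cap B(p, r))$ (or dropped), since the literal $\gamma r$-ball version cannot be proved.
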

\begin{proof}
	The first five properties are immediate from the description of the algorithm. Now, we prove the sixth property. Suppose for contradiction that there exist $p, q \in M$ with $d(p, q) \le (\gamma -1) r$, and without loss of generality, $p$ was added to $M$ before $q$. Then, note that at the end of this iteration, $B(q, r) \cap U = \emptyset$. Therefore, $q$ will subsequently never be a candidate for being added to $M$ in line~\ref{line:argmax}.
\end{proof}

A key property of this greedy clustering, established by Charikar et al.~\cite{charikar2001algorithms}, is that for any $k \geq 1$ balls of radius $r$, the weight of the points in the first $k$ clusters is at least as large as the weight of the points covered by the $k$ balls.

\begin{lemma}
	\label{charikar-lemma}
	Suppose that the parameter $\gamma$ used in Algorithm~\ref{alg:greedyclustering} is at least $3$. Let $\B$ be any collecion of $k \geq 1$ balls of radius $r$, each centered at a point in $X$. Let $M'$ consist of the first $k'$ points of $M$ chosen by the algorithm, where $k' = \min \{k, |M|\}$. We have
	\[ \sum_{p \in M'} wt(p) = \omega \left( \bigcup_{p \in M'} C(p) \right) \geq \omega \left( Y \cap \bigcup_{B \in \B} B \right). \]
\end{lemma}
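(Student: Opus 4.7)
The plan is to prove this by induction on the number of balls $k$. The trivial case $|M| < k$ gives $k' = |M|$ and the clusters $\{C(p)\}_{p \in M'}$ partition all of $Y$, so $\sum_{p \in M'} wt(p) = \omega(Y) \geq \omega(W)$, where I write $W \coloneqq Y \cap \bigcup_{B \in \B} B$ throughout. I may therefore assume $|M| \geq k$, i.e.\ $k' = k$. The base case $k = 1$ is direct from the greedy rule: $p_1 = \arg\max_{q \in X} \omega(Y \cap B(q, r))$ satisfies $wt(p_1) \geq \omega(Y \cap B(p_1, r)) \geq \omega(Y \cap B)$ for the single ball $B$.

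For the inductive step with $k \geq 2$ balls, write $B_j = B(b_j, r)$ and examine the first greedy cluster $C(p_1) = Y \cap B(p_1, \gamma r)$, splitting into two cases based on whether some $B_j$ meets $B(p_1, r)$ inside $X$. If some $B_{j^*}$ does, then $d(p_1, b_{j^*}) \le 2r$ and $\gamma \geq 3$ together imply $B_{j^*} \subseteq B(p_1, 3r) \subseteq B(p_1, \gamma r)$, hence $Y \cap B_{j^*} \subseteq C(p_1)$. The continuation of the greedy on $U_2 = Y \setminus C(p_1)$ produces $p_2, \ldots, p_k$, and applying the inductive hypothesis to the $k-1$ balls $\{B_j : j \neq j^*\}$ yields $\sum_{i=2}^k wt(p_i) \geq \omega(U_2 \cap \bigcup_{j \neq j^*} B_j)$. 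Since any $y \in W \setminus C(p_1)$ must lie in some $B_j$ with $j \neq j^*$ (it cannot be in $B_{j^*}$, as $Y \cap B_{j^*} \subseteq C(p_1)$), we obtain $\omega(W \setminus C(p_1)) \leq \omega(U_2 \cap \bigcup_{j \neq j^*} B_j)$. Adding the trivial $wt(p_1) \geq \omega(W \cap C(p_1))$ gives $\omega(W) \leq \sum_{i=1}^k wt(p_i)$.

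The delicate case is when no $B_j$ meets $B(p_1, r)$: then $Y \cap B(p_1, r) \subseteq Y \setminus W$, contributing ``extra'' non-$W$ weight to $C(p_1)$. Letting $j^* = \arg\max_j \omega(Y \cap B_j)$, the greedy rule gives $\omega(Y \cap B(p_1, r)) \geq \omega(Y \cap B_{j^*})$, so $\omega(C(p_1) \setminus W) \geq \omega(Y \cap B_{j^*})$ and hence $wt(p_1) \geq \omega(W \cap C(p_1)) + \omega(Y \cap B_{j^*})$. The inductive hypothesis applied to the $k-1$ balls $\{B_j : j \neq j^*\}$ gives $\sum_{i=2}^k wt(p_i) \geq \omega(U_2 \cap \bigcup_{j \neq j^*} B_j)$, while the decomposition $W \setminus C(p_1) \subseteq ((Y \cap B_{j^*}) \setminus C(p_1)) \cup (U_2 \cap \bigcup_{j \neq j^*} B_j)$ yields $\omega(W \setminus C(p_1)) \leq \omega(Y \cap B_{j^*}) + \omega(U_2 \cap \bigcup_{j \neq j^*} B_j)$. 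The extra $\omega(Y \cap B_{j^*})$ absorbed by $wt(p_1)$ precisely cancels the same term in this union bound, again giving $\omega(W) \leq \sum_{i=1}^k wt(p_i)$. This second case is the main obstacle: one must observe that even when the greedy's first cluster is disjoint from every optimum ball, the greedy's max-weight rule still forces it to absorb at least $\omega(Y \cap B_{j^*})$ worth of points outside $W$, providing exactly the slack needed to pay for the portion of $B_{j^*}$ that $C(p_1)$ only partially absorbs.
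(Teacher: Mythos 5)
Your proof is correct. Note that the paper itself does not prove this lemma: it defers to Charikar et al.\ \cite{charikar2001algorithms} and only justifies the equality $\sum_{p \in M'} wt(p) = \omega\bigl(\bigcup_{p \in M'} C(p)\bigr)$ via Observation~\ref{greedy:observe}, so your argument supplies a self-contained proof where the paper has none. What you give is, in essence, the standard Charikar et al.\ charging argument recast as an induction on $k$: at each step the first greedy cluster either \emph{swallows} some optimal ball $B_{j^*}$ entirely (if $B_{j^*}$ meets $B(p_1,r)$, then $d(p_1,b_{j^*}) \le 2r$ and $\gamma \ge 3$ give $Y \cap B_{j^*} \subseteq C(p_1)$), or, if it touches no optimal ball, the greedy rule forces $\omega(Y \cap B(p_1,r)) \ge \omega(Y \cap B_{j^*})$, and this weight lies outside $W$, so it exactly pays for discarding $B_{j^*}$ from the collection; both the case split and the cancellation are handled correctly, and the centers-in-$X$ hypothesis is used exactly where needed (so that $b_{j^*}$ is an eligible candidate in the argmax). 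Two implicit points are worth making explicit but are not gaps: (i) the inductive hypothesis is invoked for the run of \textsc{GreedyClustering} on $(U_2, X, r, \gamma, \omega)$, and one should observe that the continuation of the original run after its first iteration \emph{is} such a run (equivalently, state the induction for every execution of the greedy, so that tie-breaking in the argmax is immaterial) --- this is the same device the paper uses in the proof of Lemma~\ref{lem:greedy-ordering}; (ii) when $|M| \ge k \ge 2$ the sub-run has at least $k-1$ mega-points, so the inductive bound indeed reads $\sum_{i=2}^{k} wt(p_i)$, which you implicitly rely on.
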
  

The equality follows from the definition of $wt(p)$ and the fact that the clusters partition $Y$, as stated in Obervation~\ref{greedy:observe}.
\section{From Robust $t$-\nukc to Colorful $(t-1)$-\nukc} \label{sec:robust-colorful}


Let $\I = ((X, d), (\omega, m) , (k_1, k_2, \ldots, k_t), (r_1, r_2, \ldots, r_t))$ be an instance of Robust $t$-\nukc. The reduction to Colorful $(t-1)$-NuKC consists of two phases. In the first phase, we use Algorithm~\ref{alg:greedyclustering} to reduce the instance $\I$ to an instance $\I'$ focused on the cluster centers output by the greedy algorithm. A key property of this reduction is that we may set $r_t = 0$ in the instance $\I'$ -- each ball at level $t$ is allowed to cover at most one point. 

In the second phase, we transform $\I'$ to $O(n)$ instances of Colorful $(t-1)$-\nukc. Assuming there exists a feasible solution for $\I'$, at least one of the instances $\I''$ of Colorful $(t-1)$-\nukc has a feasible solution, and any approximate solution to $\I''$ can be used to obtain an approximate solution to $\I'$ (and thus to $\I$).

\paragraph{Phase 1.}

Let $\I = ((X, d), (\omega, m), (k_1, k_2, \ldots, k_t), (r_1, r_2, \ldots, r_t))$ be an instance of Robust $t$-\nukc. We call the algorithm \textsc{GreedyClustering}$(X, X, r_t, 3, \omega)$, and obtain a set of points $M$ with the corresponding clusters $C(p)$ for $p \in M$. The greedy algorithm also returns a weight $wt(p) = \omega(C(p))$ for each $p \in M$. Let us number the points of $M$ as $p_i$, where $i$ is the iteration in which $p_i$ was added to the set $M$ by algorithm \textsc{GreedyClustering}$(X, X, r_t, 3, \omega)$. This gives an ordering $\sigma = \langle p_1, p_2, \ldots, p_{|M|} \rangle $ of the points in $M$. Note that $wt(p_i) \geq wt(p_j)$ for $i \geq j$. 

We define a weight function $\lambda: X \rightarrow \integer^+$. Let $\lambda(p) = wt(p)$ for $p \in M$ and $\lambda(p) = 0$ for $p \in X \setminus M$. Note that for $p \in M$, $\lambda(p) = wt(p) = \omega(C(p))$. Thus, for each $p \in M$, we are moving the weight from points in cluster $C(p)$ to the cluster center $p$. Clearly, $\omega(X) = \lambda(X)$.

The output of Phase 1 is the instance $\I' = ((X, d), (\lambda, m), (k_1, k_2, \ldots, k_t), (r'_1, r'_2, \ldots, r'_{t-1}, 0))$ of $t$-Robust-NuKC, where $r'_i = r_i + 3r_t$. 
Note that in the instance $\I'$, we have $r'_t = 0$, whereas the other radii in $\I$ have been increased by an additive factor of $3 r_t$. The following claim relates instances $\I$ and $\I'$.

\begin{lemma}
	\label{lemma:phase1}
	(a) If instance $\I$ has a feasible solution, then so does the instance $\I'$. (b)  Given a solution $(\B'_i)_{i \in [t]}$ for $\I'$ that uses at most $k_i$ balls of radius $\alpha r'_i$ for every $i \in [t]$, we can obtain a solution $(\B_i)_{i \in [t]}$ for $\I$ that uses at most $k_i$ balls of radius at most $\alpha r'_i + 3r_t \leq \alpha r_i + (3 \alpha + 3) r_t$ for  $1 \le i \le t$.
\end{lemma}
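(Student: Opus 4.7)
The plan is to prove (b) first since it follows directly from how $\lambda$ is defined, and then address the more delicate (a). For (b), given $(\B'_i)_{i \in [t]}$ for $\I'$ with $|\B'_i|\le k_i$ and radii $\alpha r'_i$, I set $\B_i := \{B(c, \alpha r'_i + 3 r_t) : B(c, \alpha r'_i) \in \B'_i\}$. Whenever some $p \in M$ lies inside a ball $B(c, \alpha r'_i) \in \B'_i$, the triangle inequality together with $C(p) \subseteq B(p, 3r_t)$ implies $C(p) \subseteq B(c, \alpha r'_i + 3 r_t) \in \B_i$. Since $\lambda(p) = \omega(C(p))$ for $p \in M$, $\lambda$ vanishes outside $M$, and the clusters $\{C(p)\}_{p \in M}$ partition $X$, the $\omega$-weight covered by $(\B_i)_i$ is at least the $\lambda$-weight covered by $(\B'_i)_i$, which is $\ge m$; the radii bound $\alpha r'_i + 3r_t \le \alpha r_i + (3\alpha + 3)r_t$ follows by expanding $r'_i = r_i + 3r_t$.

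For (a), let $(\B_1,\ldots,\B_t)$ be a feasible solution for $\I$ covering a set $P$ with $\omega(P)\ge m$, and decompose $P = P_{<t} \cup P_t$ where $P_{<t} = \bigcup_{i<t}\bigcup_{c \in \B_i} B(c,r_i)\cap X$ and $P_t = \bigcup_{c \in \B_t} B(c, r_t)\cap X$. My proposal is: for $i<t$, $\B'_i$ has the same centers as $\B_i$ but with the expanded radius $r'_i = r_i + 3 r_t$; letting $A \subseteq M$ denote the set of cluster centers covered by these expanded level-$<t$ balls, I take $\B'_t$ to consist of the top-$k_t$ weighted cluster centers of $M \setminus A$, each as a ball of radius $0$ (equivalently, by item~5 of Observation~\ref{greedy:observe}, the first $k_t$ elements of $M \setminus A$ in greedy order). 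A triangle inequality calculation shows that for any $y \in P_{<t}$ lying in some $B(c, r_i)$ with $i<t$, the cluster center $p_y$ of $y$ satisfies $d(c, p_y) \le r_i + 3r_t = r'_i$, so $p_y \in A$. Therefore $W_A := \bigcup_{p \in A} C(p) \supseteq P_{<t}$, giving $\lambda(A) = \omega(W_A) \ge \omega(P_{<t})$. Because $\B'_t \cap A = \emptyset$ and the clusters in $M$ are disjoint, $\lambda(A \cup \B'_t) = \omega(W_A) + \omega(W_B)$, where $W_B := \bigcup_{p \in \B'_t} C(p) \subseteq X \setminus W_A$.

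The crux of the argument, and the main obstacle, is the sub-claim $\omega(W_B) \ge \omega(P_t \setminus W_A)$: the $k_t$ heaviest clusters of $M \setminus A$ must collectively cover the weight of $P_t$ lying outside $W_A$. Granting this, the short chain $\lambda(A \cup \B'_t) \ge \omega(W_A) + \omega(P_t \setminus W_A) = \omega(W_A \cup P_t) \ge \omega(P_{<t} \cup P_t) = \omega(P) \ge m$ finishes (a). To prove the sub-claim I plan to invoke a restricted form of Lemma~\ref{charikar-lemma}: apply Charikar's guarantee to the $k_t$ balls in $\B_t$ under the modified weight function $\omega' := \omega \cdot \mathbbm{1}_{X \setminus W_A}$, which turns $W_A$ into a zero-weight region and makes the target weight on the right-hand side of the bound equal to $\omega(P_t \setminus W_A)$. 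Since every cluster $C(p)$ with $p \in M \setminus A$ is contained in $X \setminus W_A$ and hence has the same $\omega$- and $\omega'$-weight, the top-$k_t$ clusters of $M \setminus A$ should inherit the Charikar covering guarantee. The delicate point is that the \emph{original} greedy's picks need not match those of a greedy run under $\omega'$---the original may prioritize clusters in $A$, which carry extra $\omega$-weight from $W_A$ but are ``invisible'' under $\omega'$---so some care is needed (likely an iteration-by-iteration comparison, or a direct adaptation of Charikar's proof that only charges weight in $X \setminus W_A$) to conclude the sub-claim.
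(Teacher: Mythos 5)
Your part (b) and the overall skeleton of your part (a) — splitting $M$ into the centers $A$ reachable by the additively expanded level-$<t$ balls and the rest, and placing $k_t$ radius-$0$ balls at the heaviest (equivalently, greedily earliest) centers of $M\setminus A$ — coincide with the paper's proof, which uses the slightly smaller set $M_1=\{p\in M:\ C(p)\ \text{meets a ball of}\ \B_1,\dots,\B_{t-1}\}$ in place of your $A$ (either decomposition works). However, the one step you defer, the sub-claim $\omega(W_B)\ge\omega(P_t\setminus W_A)$, is exactly the technical heart of the lemma (it is Lemma~\ref{lem:greedy-ordering} in the paper), and your proposal does not prove it: you correctly observe that invoking Lemma~\ref{charikar-lemma} with the zeroed-out weight $\omega'=\omega\cdot\mathbbm{1}_{X\setminus W_A}$ is problematic because the greedy you actually ran (under $\omega$) need not agree with a greedy run under $\omega'$, and you leave the resolution open. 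As written, this is a genuine gap.

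The fix is not to modify the weight function but to restrict the \emph{ground set} of the greedy, which is precisely why Algorithm~\ref{alg:greedyclustering} and Lemma~\ref{charikar-lemma} are stated for a subset $Y\subseteq X$ while centers are still chosen from all of $X$. Since $W_A$ is a union of whole clusters, $X\setminus W_A=\bigcup_{p\in M\setminus A}C(p)$; now imagine running \textsc{GreedyClustering}$(X\setminus W_A, X, r_t, 3, \omega)$. By induction over iterations (with consistent tie-breaking), this run selects exactly the points of $M\setminus A$ in their original greedy order, with the same clusters $C(p)$ and weights $wt(p)$: at the step where the original run picked such a $p$, the uncovered part of $B(p,r_t)$ was contained in $C(p)\subseteq X\setminus W_A$, so its weight is unchanged in the restricted run, while for every other candidate center the uncovered weight can only have decreased. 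Lemma~\ref{charikar-lemma} applied to this restricted run with the at most $k_t$ balls of $\B_t$ then gives $\sum_{p\in N} wt(p)\ \ge\ \omega\bigl((X\setminus W_A)\cap\bigcup_{\ball\in\B_t}\ball\bigr)=\omega(P_t\setminus W_A)$, where $N$ is your set of radius-$0$ centers; with this sub-claim in hand, the rest of your accounting (the chain $\lambda(A)+\lambda(N)\ge\omega(W_A\cup P_t)\ge m$) goes through and matches the paper's argument.
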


\begin{proof}
	We begin with part (b). For each ball in $B(p,r)$ that is part of the solution $(\B'_i)_{i \in [t]}$, we replace it with the ball $B(p, r + 3r_t)$ to obtain a solution $(\B_i)_{i \in [t]}$ for $\I$. That is, we expand each ball by an additive $3 r_t$. If $B(p,r)$ covers $q \in M$, then $B(p, r + 3r_t)$ covers $C(q)$, and $\lambda(q) = \omega(C(q))$.  Let $M' \subseteq M$ denote the points covered by $(\B'_i)_{i \in [t]}$. The weight of the points covered by $(\B_i)_{i \in [t]}$ is at least 
	\[ \sum_{p \in M'} \omega(C(p)) = \sum_{p \in M'} \lambda(p) \geq m. \] 
	
	We now establish (a). Fix a feasible solution $(\B_i)_{i \in [t]}$ to $\I$ that covers $\omega$-weight at least $m$, where $\B_i$ is a set of at most $k_i$ balls of radius $r_i$, for $i \in [t]$. Let $M_1 \subseteq M$ be the set of points $p$ such that some point in $C(p)$ is covered by a ball in $\B_1, \B_2, \ldots, \B_{t-1}$. 
	
	Now let $M_2 = M \setminus M_1$ be the set of points $p$, such that any point in $C(p)$ is either covered by a ball from $\B_t$, or is an outlier. Let $X_i \coloneqq \bigcup_{p \in M_i} C(p)$ for $i = 1, 2$. Note that $X = X_1 \sqcup X_2$.

	Note that in the sequence $\sigma = \langle p_1, p_2, \ldots, p_{|M|} \rangle$, the points of $M_1$ and $M_2$ may appear in an interleaved fashion. Let $p_{i_1}, p_{i_2}, \ldots, p_{i_{|M_2|}}$ be the subsequence restricted to the points in $M_2$. In the following lemma, we argue that the first $k_t$ points in this subsequence are sufficient to replace the balls in $\B_t$. Let $k' = \min \{|\B_t|, |M_2|\} \leq k_t$.
	
	\begin{lemma} \label{lem:greedy-ordering}
		There exists a subset $M_2^+ \subseteq M_2$ of size at most $k'$ such that $\sum_{p \in M_2^+} wt(p) \ge \omega \left(X_2 \cap \bigcup_{\ball \in \B_t} \ball \right).$
	\end{lemma}
	\begin{proof}

		Let $M_2^+ = \{p_{i_1}, p_{i_2}, \ldots, p_{i_{k'}}\}$. That is, $M_2^+$ consists of the first $k'$ points of $M_2$ picked by the greedy algorithm. Recall that $M_2^+ \subseteq M_2$, and thus for $p_{i_j} \in M_2^+$, it holds that $C(p_{i_j}) \subseteq X_2$.
		
		Now imagine calling the algorithm \textsc{GreedyClustering}$(X_2, X, r_t, 3, \omega)$. Observe that in the iteration $1 \le j \le |M_2|$, this algorithm will select point $p_{i_j}$ (as defined above) in Line 3, and the corresponding cluster and its weight will be $C(p_{i_j})$ and $wt(p_{i_j})$ -- exactly as in the execution of \textsc{GreedyClustering}$(X, X, r_t, 3, \omega)$. That is, the  algorithm \textsc{GreedyClustering}$(X_2, X, r_t, 3, \omega)$ will output $M_2$ and the clusters $C(p)$ for each $p \in M_2$.
		
		Now, $\B_t$ consists of a set of $|\B_t|$ balls of radius $r_t$. The lemma now follows from Lemma~\ref{charikar-lemma} applied to \textsc{GreedyClustering}$(X_2, X, r_t, 3, \omega)$.
			\end{proof}
	
	Using Lemma \ref{lem:greedy-ordering}, we now construct a solution to instance $\I'$. Fix index $1 \leq i \leq t-1$, and $\B'_i$ denote the set of balls obtained by expanding each ball in $\B_i$ by an additive $3 r_t$. Note that each ball in $\B'_i$ has radius $r'_i = r_i + 3 r_t$.  For every point $p \in M_2^+$, we add a ball of radius $0$ around it and let $\B'_t$ be the resulting set of balls. Note that $|\B'_t| = |M_2^+| \leq k' \leq k_t$.

By definition, for each point $p \in M_1$, there is a ball in $(\B_i)_{i \in [t-1]}$ that intersects cluster $C(p)$, whose points are at distance at most $3 r_t$ from $p$. It follows that the balls in $(\B'_i)_{i \in [t-1]}$ cover each point in $M_1$. 
	
	Using Lemma \ref{lem:greedy-ordering}, the coverage of $(\B'_i)_{i \in [t]}$ in instance $\I'$ is at least
	\[ \sum_{p \in M_1} wt(p) + \sum_{p \in M_2^+} wt(p) \geq \omega(X_1) +  \omega\left(X_2 \cap \bigcup_{\ball \in \B_t} \ball \right) \geq m. \]
	
	The final inequality follows because any point covered by solution $(\B_i)_{i \in [t]}$ for $\I$ either belongs to $X_1$ or to $X_2 \cap \bigcup_{\ball \in \B_t} \ball$. Thus, we have shown that $\I'$ has a feasible solution.
	
\end{proof}

\paragraph{Phase 2.}
Now we describe the second phase of the algorithm. We have the instance $\I' = ((X, d), (\lambda, m), (k_1, k_2, \ldots, k_t), (r'_1, r'_2, \ldots, r'_{t-1}, 0))$ of Robust $t$-\nukc that is output by Phase 1. Phase 2 takes $\I'$ as input and generates an instance $\I(\ell)$, for each $0 \leq \ell \leq |X|$, of the Colorful $(t-1)$-\nukc problem. Note that the number of generated instances is $|X| + 1 = O(n)$. If $\I'$ is feasible, at least one of these $|X| + 1$ instances will be feasible.

Let $\sigma = \langle p_1, p_2, \ldots, p_{|X|} \rangle $ be an ordering of the points in $X$ by non-increasing $\lambda$. That is, $\lambda(p_i) \geq \lambda(p_j)$ for $i \leq j$. 

Fix an index $0 \leq \ell \leq |X|$. We now describe the instance $\I(\ell)$ of colorful $(t-1)$-\nukc. Let $R = \{ p_1, p_2, \ldots, p_{\ell}\}$ denote the set of {\em red} points, and $B = \{ p_{\ell +1}, p_{\ell + 2}, \ldots, p_{|X|} \}$ denote the set of {\em blue} points. For each $p \in B$, define its blue weight as $\omega_b(p) \coloneqq \lambda(p)$; for each $p \in R$, define its blue weight as $\omega_b(p) \coloneqq 0$. Define the blue coverage $m_b$ for instance $\I(\ell)$ as $m_b \coloneqq m - \lambda(R)$. We define the red weight function $\omega_r$ in a slightly different manner. For each red point $p \in R$, let its red weight $\omega_r(p) \coloneqq 1$; for each $p \in B$, let red weight $\omega_r(p) \coloneqq  0$. Let $m_r \coloneqq \sum_{p \in R} \omega_r(p) - k_t = |R| - k_t$ denote the red coverage for instance $\I(\ell)$. Note that $\omega_r$ is supported on $R$ and $\omega_b$ on $B$.  Let $\I(\ell)  \coloneqq ((X, d), (\omega_r, \omega_b, m_r, m_b), (k_1, k_2, \ldots, k_{t-1}), (r'_1, r'_2, \ldots, r'_{t-1}) )$ denote the resulting instance of Colorful $(t-1)$-\nukc problem. Recall that a solution to this instance is required to cover red weight that adds up to at least $m_r$, and blue weight that adds up to at least $m_b$. (In instance $\I(\ell)$, the point sets $R$ and $B$, the red and blue weights, and total coverage requirements $m_r$ and $m_b$ all depend on the index $\ell$. This dependence is not made explicit in the notation, so as to keep it simple.)

We now relate the instance $\I'$ to the instances $\I(\ell)$, for $0 \leq \ell \leq |X|$.

\begin{lemma}
	\label{lemma:phase2}
	(a)   If the instance $\I' = ((X, d), (\lambda, m), (k_1, k_2, \ldots, k_t), (r'_1, r'_2, \ldots, r'_{t-1}, 0) )$ is feasible, then there exists an $0 \leq \ell^* \leq |X|$ such that instance $\I(\ell^*)$ is feasible. \\
	(b) Let $\I(\ell) = ((X, d), (\omega_r, \omega_b, m_r, m_b), (k_1, k_2, \ldots, k_{t-1}), (r'_1, r'_2, \ldots, r'_{t-1}) )$ be a generated instance of Colorful $(t-1)$-NuKc, and suppose $(\B''_i)_{i \in [t-1]}$ is a solution to this instance such that $\B''_i$ contains at most $k_i$ balls of radius $\alpha r'_i$ for $1 \le i \le t-1$, and covers red weight at least $m_r$ and blue weight at least $m_b$. Then, we can efficiently obtain a solution to the instance $\I'$ that uses at most $k_i$ balls of radius $\alpha r'_i$ for $1 \le i \le t -1$, and at most $k_t$ balls of radius $0$.
\end{lemma}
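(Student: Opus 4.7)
The plan is to view each of the $k_t$ radius-$0$ balls in $\I'$ as a ``singleton-cover token'', each contributing exactly one covered point. The colorful red class in $\I(\ell)$ is designed to simulate these tokens: its coverage requirement $m_r = |R|-k_t$ allows exactly $k_t$ of the top-$\lambda$ points of $R$ to be left out, namely the ones that in $\I'$ would be picked up by the last-level balls. So the whole correspondence amounts to choosing $\ell^{*}$ to align the top-$\lambda$ singletons covered by $\B'_t$ with $R \setminus P$, where $P$ is the set covered by $\B'_1 \cup \cdots \cup \B'_{t-1}$.

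For part (a), I would fix a feasible solution $(\B'_1,\ldots,\B'_t)$ for $\I'$ and let $P$ be as above. A simple WLOG swap --- replacing the at most $k_t$ centers of $\B'_t$ by the $k_t$ highest-$\lambda$ points in $X \setminus P$ (call this set $Q$) --- preserves feasibility since it can only increase the new weight contributed at the last level. Then I would choose $\ell^{*}$ to be the largest index with $|\{p_1,\ldots,p_{\ell^{*}}\}\setminus P| \le k_t$. A short case analysis (the case $\ell^{*}=|X|$ is trivial, as $R=X$ and the blue requirement becomes vacuous; otherwise the defining inequality is tight, so $|R\setminus P|=k_t$) then gives $R\setminus P = Q$, because the first $k_t$ points of $X\setminus P$ in $\sigma$-order all lie inside $R$. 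Plugging in, the red coverage of $(\B'_1,\ldots,\B'_{t-1})$ in $\I(\ell^{*})$ equals $|P\cap R| = \ell^{*}-k_t = m_r$, and the blue coverage equals $\lambda(P)-\lambda(P\cap R) \ge m - \lambda(Q) - \lambda(P\cap R) = m - \lambda(R) = m_b$, where the middle inequality uses $\lambda(P)+\lambda(Q)\ge m$ from feasibility of the swapped solution.

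For part (b), given $(\B''_1,\ldots,\B''_{t-1})$ for $\I(\ell)$ covering a set $P''$, the red-coverage bound $|P''\cap R|\ge m_r = |R|-k_t$ immediately yields $|R\setminus P''|\le k_t$, so I would define $\B'_t$ as a radius-$0$ ball at each point of $R\setminus P''$, giving at most $k_t$ such balls. The ball counts and radii in the resulting solution for $\I'$ are correct by construction. The total $\lambda$-weight covered is $\lambda(P'')+\lambda(R\setminus P'')$; combining $\lambda(P'')\ge \lambda(P''\cap R)+m_b$ (from the blue requirement) with $\lambda(R\setminus P'') = \lambda(R)-\lambda(P''\cap R)$ yields at least $m_b+\lambda(R)=m$, as required. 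The only real technical delicacy in the whole proof is the WLOG step in part (a); once the last-level balls are pinned to the top-$\lambda$ singletons of $X\setminus P$ and the right $\ell^{*}$ is chosen, both coverage checks collapse to one-line arithmetic.
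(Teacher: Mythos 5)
Your proof is correct and follows essentially the same route as the paper: part (b) is the paper's argument verbatim, and in part (a) you use the same key normalization (placing the radius-$0$ balls at the highest-$\lambda$ points outside the set covered by the larger balls) before choosing a threshold $\ell^*$ that makes $R \setminus P$ coincide with those centers. The only cosmetic difference is your choice of $\ell^*$ as the largest index with $|R \setminus P| \le k_t$, whereas the paper takes $\ell^*$ to be the position of the $k'$-th uncovered point in the $\sigma$-order; both choices yield the same coverage arithmetic.
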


\begin{proof}
	
	We first show part (b). In instance $\I(\ell)$, the red weight $\omega_r(p) = 1$ for each $p \in R$, so the solution $(\B''_i)_{i \in [t-1]}$ covers at least $m_r = \sum_{p \in R} \omega_r(p) - k_t = |R| - k_t$ red points. So the number of red points that are not covered is at most $k_t$. Construct $\B'_t$ by adding a ball of radius $0$ at each uncovered point in $R$. Thus, $|\B'_t| \leq k_t$.  
	
	Let $\B'_i = \B''_i$ for each $1 \leq i \leq t-1$. Now, we argue that the solution $(\B'_i)_{i \in [t]}$ covers weight at least $m$ in instance $\I'$. Note that this solution covers all points in $R$, and a subset $C \subseteq B$ such that $\omega_b(C) \geq m_b = m - \lambda(R)$. Thus the coverage for $\I'$ is at least
	\[ \lambda(R) + \lambda(C) = \lambda(R) + \omega_b(C) \geq \lambda(R) + m_b = m. \] 
	
	We now turn to part (a). Fix a feasible solution $(\B'_i)_{i \in [t]}$ to $\I'$. Let $M_1 \subseteq X$ denote the subset consisting of each point covered by a ball in $\B'_i$, for $1 \leq i \leq t-1$. Let $M_2 = X \setminus M_1$. Each point in $M_2$ is either an outlier or is covered by a ball in $\B'_t$. 
	Note that in the sequence $\sigma = \langle p_1, p_2, \ldots, p_{|X|} \rangle$, the points of $M_1$ and $M_2$ may appear in an interleaved fashion. Let $p_{i_1}, p_{i_2}, \ldots, p_{i_{|M_2|}}$ be the subsequence restricted to the points in $M_2$. Let $k' = \min \{k_t, |M_2|\}$, and let $M_2^+ = \{ p_{i_1}, p_{i_2}, \ldots, p_{i_{k'}} \}$. A key observation is that $\lambda(M_2^+)$ is at least as large as the total weight of the points in $M_2$ covered by balls in $\B'_t$. This is because each ball in $\B'_t$ has radius $0$ and can cover only one point in $M_2$; and the maximum coverage using such balls is obtained by placing them at the points in $M_2$ with the highest weights, i.e, $M_2^+$. Without loss of generality, we assume that $\B'_t$ consists of balls of radius $0$ placed at each point in $M_2^+$.   
	
	Now, let the index $\ell^* \coloneqq i_{k'}$. We now argue that the instance $\I(\ell^*)$ of colorful $(t-1)$-\nukc is feasible. In particular, we argue that $(\B'_i)_{i \in [t-1]}$ is a solution. Consider the set $R = \{p_1, p_2, \ldots, p_{\ell^*}\}$ of red points in $\I(\ell^*)$. Each point in $R$ is either in $M_1$ or in $M_2^+$, and is therefore covered by $(\B'_i)_{i \in [t]}$. It follows that $(\B'_i)_{i \in [t-1]}$ covers at least $|R| - |\B'_t| \geq |R| - |k_t| = m_r$ points of $R$. In other words, the red weight in $\I(\ell^*)$ covered by $(\B'_i)_{i \in [t-1]}$ is at least $m_r$.
	
	Now consider the set $B = \{p_{\ell +1}, p_{\ell + 2}, \ldots, p_{|X|}\}$ of blue points in $\I(\ell^*)$. Let $C \subseteq B$ denote the blue points covered by solution $(\B'_i)_{i \in [t]}$. As $(\B'_i)_{i \in [t]}$ covers points with weight at least $m$ in instance $\I'$, we have $\lambda(R) + \lambda(C) \geq m$; thus, $\lambda(C) \geq m - \lambda(R) = m_b$. However, the balls in $\B'_t$ do not cover any point in $B$. We conclude that the balls in $(\B'_i)_{i \in [t-1]}$ cover all points in $C$. For any $p \in B$, we have $\lambda(p) = \omega_b(p)$. It follows that the blue weight in $\I(\ell^*)$ covered by  $(\B'_i)_{i \in [t-1]}$ is at least $\omega_b(C) = \lambda(C) \geq m_b$. This concludes the proof of part (a).
	
\end{proof}

Combining Lemmas \ref{lemma:phase1} and \ref{lemma:phase2} from Phases 1 and 2, we obtain the following reduction from robust $t$-NuKC to colorful $(t-1)$-NuKC. 

\begin{theorem} \label{thm:t-to-t-1}
	There is a polynomial-time algorithm that, given an instance $\I = ((X, d), (\omega, m), \allowbreak (k_1, \ldots, k_t), (r_1, \ldots, r_t) )$ of Robust $t$-\nukc, outputs a collection of $O(n)$ instances of Colorful $(t-1)$-\nukc with the following properties: (a) If $\I$ is feasible, then at least one of the instances $\I(\ell) = ((X,d), \allowbreak (\omega_r, \omega_b, m_r, m_b), (k_1, \ldots, k_{t-1}), (r'_1, \ldots, r'_{t-1}) )$ of Colorful $(t-1)$-\nukc is feasible; (b) given an $\alpha$-approximate solution to some instance $\I(\ell)$, we can efficiently construct a solution to $\I$ that uses at most $k_i$ balls of radius at most $\alpha r_i + (3\alpha + 3)r_t$. 
\end{theorem}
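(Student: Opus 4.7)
The theorem is a direct composition of Phases 1 and 2, so I would prove it by simply chaining Lemmas \ref{lemma:phase1} and \ref{lemma:phase2}. The algorithm itself consists of two steps: given the input instance $\I$ of Robust $t$-\nukc, first run Phase 1 to obtain the Robust $t$-\nukc instance $\I' = ((X,d),(\lambda,m),(k_1,\ldots,k_t),(r'_1,\ldots,r'_{t-1},0))$ with $r'_i = r_i + 3 r_t$ for $1 \le i \le t-1$; then run Phase 2 on $\I'$ to emit the $|X|+1 = O(n)$ Colorful $(t-1)$-\nukc instances $\I(\ell)$ for $0 \le \ell \le |X|$. Both phases are clearly polynomial-time, and both produce only a polynomial number of outputs, so the overall reduction is polynomial.

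For part (a), assuming $\I$ is feasible, Lemma \ref{lemma:phase1}(a) gives that $\I'$ is feasible, and then Lemma \ref{lemma:phase2}(a) produces an index $\ell^\ast$ for which the instance $\I(\ell^\ast)$ of Colorful $(t-1)$-\nukc is feasible, which is exactly what is claimed.

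For part (b), suppose we are handed an $\alpha$-approximate solution $(\B''_i)_{i\in[t-1]}$ to some instance $\I(\ell)$, i.e.\ each $\B''_i$ uses at most $k_i$ balls of radius $\alpha r'_i$ and the red and blue coverage constraints are met. Lemma \ref{lemma:phase2}(b) converts this into a solution $(\B'_i)_{i\in[t]}$ to $\I'$ using at most $k_i$ balls of radius $\alpha r'_i$ for $1\le i \le t-1$ and at most $k_t$ balls of radius $0$. Feeding this into Lemma \ref{lemma:phase1}(b), which expands every ball by an additive $3 r_t$, yields a solution to $\I$ using at most $k_i$ balls of radius $\alpha r'_i + 3 r_t$ for $1 \le i \le t-1$ and at most $k_t$ balls of radius $3 r_t$. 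Substituting $r'_i = r_i + 3 r_t$ for $1 \le i \le t-1$ gives the bound $\alpha r'_i + 3 r_t = \alpha r_i + (3\alpha + 3) r_t$, and for $i = t$ the trivial bound $3 r_t \le \alpha r_t + (3\alpha + 3) r_t$ holds since $\alpha \ge 1$. This matches the claimed radius guarantee across all $t$ levels.

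There is no genuine technical obstacle here beyond careful bookkeeping of radii. The only practical subtlety is that Phase 2 generates a polynomial-sized family $\{\I(\ell)\}$ and we do not know a priori which $\ell^\ast$ is feasible, but since the family is of size $O(n)$ we can simply run any downstream approximation algorithm on each $\I(\ell)$ and keep any solution that, when mapped back through Lemmas \ref{lemma:phase2}(b) and \ref{lemma:phase1}(b), satisfies the coverage requirement for $\I$.
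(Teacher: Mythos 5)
Your proposal is correct and matches the paper's own argument: Theorem \ref{thm:t-to-t-1} is obtained there exactly by composing Lemma \ref{lemma:phase1} with Lemma \ref{lemma:phase2}, and your radius bookkeeping ($\alpha r'_i + 3r_t = \alpha r_i + (3\alpha+3)r_t$ for $i \le t-1$, and the trivial bound $3r_t \le \alpha r_t + (3\alpha+3)r_t$ at level $t$) is precisely what the chaining requires. Nothing further is needed.
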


\begin{remark} In part (a), the  feasible solution for $\I(\ell)$ that is constructed from the feasible solution for $\I$ has the following useful property: for any $\ball$ of radius $r'_i = r_i + 3 r_t$ in the feasible solution for $\I(\ell)$, the center of $\ball$ is also the center of some ball of radius $r_i$ in the feasible solution for $\I$.
\end{remark}
\section{Ensuring Self-Coverage in Colorful $2$-\nukc} \label{sec:self-coverage}

We assume that we are given as input a Colorful $2$-\nukc instance $\I = ((X, d), (\omega_r, \omega_b, m_r, m_b), \allowbreak (k_1, k_2), (r_1, r_2) )$. Recall that $\omega_r: X \rightarrow \integer^+$ (resp. $\omega_b: X \rightarrow \integer^+$) is the red (resp. blue) weight function. The task in Colorful $2$-\nukc is to find a solution $(\B_1, \B_2)$ such that (1) $|\B_i| \le k_i$ for $i = 1, 2$, and (2) the point set $Y \subseteq X$ covered by the solution satisfies $\omega_r(Y) \geq m_r$ and $\omega_b(Y) \geq m_b$, (i.e., the solution covers points with total red weight at least $m_r$, and blue weight at least $m_b$.)  In this section, we show that $\I$ can be reduced to an instance of Colorful $2$-\nukc with $r_2 = 0$. The fact that each ball of radius $r_2$ can only cover its center in the target instance is what we mean by the term {\em self-coverage}. This reduction actually generalizes to Colorful $t$-\nukc, but we address the case $t = 2$ to keep the notation simpler.

Our reduction proceeds in two phases. In Phase 1, we construct an intermediate instance where we can ensure blue self-coverage. Then in Phase 2, we modify the intermediate instance so as to obtain red self-coverage as well.

\paragraph{Phase 1.} In this step, we call the greedy clustering algorithm using the {\em blue weight function} $\omega_b$. In particular, we call \textsc{GreedyClustering}($X, X, r_2, 3, \omega_b$) (See Algorithm \ref{alg:greedyclustering}). This algorithm returns a set of points $M \subseteq X$, where every $p \in M$ has a cluster $C(p)$ and weight $wt(p)$ such that (1) $\{C(p)\}_{p \in M}$ is a partition of $X$; (2) for any $p \in M$, $wt(p) = \omega_b(C(p))$, the blue weight of the cluster, and (3) $d(q, p) \le 3r_2$ for any $q \in C(p)$. Furthermore, the greedy algorithm naturally defines an ordering $\sigma = \langle p_1, p_2, \ldots, p_{|M|} \rangle$ of $M$ -- this is the order in which the points were added to $M$. 

We define a new weight function $\lambda_b: X \rightarrow \integer^+$ as follows: $\lambda_b(p) \coloneqq wt(p)$ if $p \in M$ and $\lambda_b(p) \coloneqq 0$ if $p \in X \setminus M$. Note that for $p \in M$, we have $wt(p) = \omega_b(C(p))$. So the new weight function $\lambda_b$ is obtained from $\omega_b$ by moving weight from each cluster $C(p)$ to the cluster center $p$. 

Phase 1 outputs the intermediate instance $\I' = ((X, d), (\omega_r, \lambda_b, m_r, m_b), (k_1, k_2), (r'_1, r'_2) )$ of Colorful $2$-\nukc, where $r'_1 = r_1 + 6 r_2$ and $r'_2 = 5 r_2$. A solution $(\B'_1, \B'_2)$ for $\I'$ is said to be {\em structured} if it has the following properties.

\begin{enumerate}
\item It is a solution to $\I'$ viewed as an instance of Colorful $2$-\nukc.
\item Let $Y \subseteq X$, the set of points {\em self-covered} by solution $(\B'_1, \B'_2)$, consist of points $p \in X$ such that either (a) $p$ is covered by $\B'_1$, or (b) $p$ is the center of some ball in $\B'_2$.  We require that
\[ \lambda_b(Y) \geq m_b. \]
\end{enumerate}

Thus, a structured solution covers red weight in the usual way; for blue weight, a ball in $\B'_2$ can only contribute blue coverage for its center,

The following lemma relates instances $\I$ and $\I'$. 

\begin{lemma}
	\label{lemma:phase1:colorful}
	(a) If instance $\I$ has a feasible solution, then the instance $\I'$ has a feasible solution that is also structured. (b)  Given a solution $(\B'_1, \B'_2)$ for $\I'$ that uses at most $k_i$ balls of radius $\alpha r'_i$ for every $i  \in \{1, 2\}$, we can obtain a solution $(\B_1, \B_2)$ for $\I$ that uses at most $k_i$ balls of radius $\alpha r'_i + 3r_2 \leq \alpha r_i + (6 \alpha + 3) r_2$ for  $i \in \{1,2\}$.
\end{lemma}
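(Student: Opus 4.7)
\textbf{Part (b)} is the straightforward direction, handled by ball expansion. Given a solution $(\B'_1,\B'_2)$ to $\I'$ using radii $\alpha r'_i$, I would enlarge each ball $B(c,\alpha r'_i)$ to $B(c,\alpha r'_i+3r_2)$. A short arithmetic check shows the new radii are at most $\alpha r_i+(6\alpha+3)r_2$; the $i=1$ case is tight, giving exactly $\alpha r_1+(6\alpha+3)r_2$. Red coverage is immediate because the new balls contain the old. For $\omega_b$-coverage, the defining property of the greedy clustering---every cluster satisfies $C(p)\subseteq B(p,3r_2)$---means that whenever a new ball covers a point $p\in M$, the added $3r_2$ slack swallows the entire cluster $C(p)$, contributing $\omega_b(C(p))=wt(p)=\lambda_b(p)$ of blue mass. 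Summing over $M$-points covered in $\I'$ and invoking $\lambda_b(\text{covered})\ge m_b$ then yields $\omega_b(\text{covered})\ge m_b$ in $\I$.

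\textbf{Part (a)} is the substantial direction. Starting from a feasible $(\B_1,\B_2)$ for $\I$, I set $\B'_1=\{B(c,r'_1):B(c,r_1)\in\B_1\}$ and partition $M$ into $M_1=\{p\in M:C(p)\cap\bigcup_{b\in\B_1}b\ne\emptyset\}$ and $M_2=M\setminus M_1$, with $X_i=\bigcup_{p\in M_i}C(p)$. A triangle-inequality argument analogous to Section~\ref{sec:robust-colorful}'s Phase~1 shows $\B'_1$ covers all of $X_1$: for any $q\in C(p_q)$ with $p_q\in M_1$ and any $B(c,r_1)\in\B_1$ meeting $C(p_q)$, we have $d(c,q)\le r_1+3r_2+3r_2=r'_1$.

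The crux is to construct $\B'_2=\{B(p,r'_2):p\in S\}$ for a set $S\subseteq M$ of size at most $k_2$ that simultaneously ensures red coverage and blue self-coverage. Two facts anchor the choice. First, for any $b\in\B_2$ and any $p\in M$ with $C(p)\cap b\ne\emptyset$ (in particular $p=\pi(b)$, the cluster center of the cluster containing $c_b$), triangle inequality gives $b\subseteq B(p,r_2+3r_2)=B(p,4r_2)\subseteq B(p,r'_2)$; so placing any one such $p$ in $S$ suffices for $\B'_2$ to cover $b$. Second, repeating the ``restricted greedy on $X_2$ equals the subsequence of $\sigma$ on $M_2$'' observation of Section~\ref{sec:robust-colorful} and applying Charikar's Lemma~\ref{charikar-lemma} to the $|\B_2|$ balls of $\B_2$ yields that the first $k'=\min(|\B_2|,|M_2|)$ centers $M_2^+\subseteq M_2$ in the greedy order satisfy $\lambda_b(M_2^+)\ge\omega_b(X_2\cap\bigcup_{b\in\B_2}b)\ge m_b-\omega_b(X_1)$. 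Consequently, if $S\supseteq M_2^+$, then $\lambda_b(M_1\cup S)\ge\omega_b(X_1)+\lambda_b(M_2^+)\ge m_b$, giving self-coverage.

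The \textbf{main obstacle} is meeting both requirements inside a single set $S$ of size at most $k_2$. My plan exploits two reductions: (i) any $b\in\B_2$ with $b\subseteq X_1$ is already covered by $\B'_1$ and contributes no requirement to $S$; (ii) each remaining $b$ admits an $M_2$-representative in $M_2\cap B(c_b,4r_2)$, since some point of $b\cap X_2$ lies in an $M_2$-cluster within $4r_2$ of $c_b$. Combining at most $|\B_2|\le k_2$ such $M_2$-representatives with, if needed, an exchange step that swaps $M_1$-candidates for $M_2^+$-points to patch a self-coverage shortfall, produces an $S\subseteq M$ of size at most $k_2$ achieving red coverage via the $4r_2$-containment and self-coverage via Charikar. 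Verifying that this exchange always closes within the cardinality budget $k_2$ is the technical heart of the proof I would need to formalize carefully.
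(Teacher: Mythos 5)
Part (b) of your proposal is fine and matches the paper's (brief) argument. In part (a), your setup coincides with the paper's: the same split of $M$ into $M_1, M_2$, the expansion of $\B_1$ by $6r_2$ to cover $X_1$, and the identification of the two properties the second-level centers must satisfy (every ball of $\B_2$ meeting $X_2$ must be swallowed by a radius-$r'_2$ ball around a chosen center, and the chosen centers' $\lambda_b$-weight must dominate $\omega_b(X_2\cap\bigcup_{\ball\in\B_2}\ball)$, which the paper states as properties (\ref{eq:red}) and (\ref{eq:blue})). But the actual construction of a single set of at most $k_2$ centers meeting both requirements is precisely what you leave open: your ``exchange step that swaps $M_1$-candidates for $M_2^+$-points,'' with the admission that verifying it ``closes within the cardinality budget $k_2$'' still needs to be formalized, is not a proof sketch of the hard step --- it is the hard step. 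An unstructured exchange is genuinely problematic: removing a ball's representative destroys red coverage for that ball, while the globally first points $M_2^+$ (to which Charikar's bound in Lemma~\ref{charikar-lemma} applies) need not contain any representative, so the two demands compete for the same $k_2$ slots, and it is not clear any greedy patching terminates within budget.

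The paper resolves this with the \textsc{Mapping Procedure} (Algorithm~\ref{alg:mapping}): each ball $\ball\in\bhat_2$ is mapped via $\varphi$ to the first point of $M_2$ (in greedy order) whose cluster it meets, and balls are grouped into pairs $(N_\ell,D_\ell)$ where $N_\ell$ is a \emph{contiguous} block of $M_2$ in the greedy order starting at such a mapped point and $|N_\ell|=|D_\ell|$ (except possibly the last block). Red coverage then holds because $\varphi(\ball)\in N_\ell$ and $\ball\subseteq B(\varphi(\ball),5r_2)$ (note: for an $M_2$-representative the containment radius is $5r_2$, not the $4r_2$ you quote --- harmless since $r'_2=5r_2$, but worth fixing), and the weight bound holds because Charikar's lemma is applied \emph{locally} to each block, i.e.\ to \textsc{GreedyClustering} restricted to $\mathcal{X}_\ell = X_2\cap\bigl(\bigcup_{q\in N_\ell}C(q)\cup\bigcup_{\ball\in D_\ell}\ball\bigr)$, giving $\sum_{p\in N_\ell}wt(p)\ge\omega_b\bigl(X_2\cap\bigcup_{\ball\in D_\ell}\ball\bigr)$ (Lemma~\ref{lem:blue-greedy}); summing over blocks and using $\sum_\ell|N_\ell|\le\sum_\ell|D_\ell|=|\bhat_2|\le k_2$ settles the budget. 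Your proposal never supplies this (or an equivalent) mechanism for charging each ball to a distinct, early-enough center while simultaneously certifying the weight bound, so as written it has a genuine gap at the technical core of part (a).
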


Part (b) is straightforward as the red weights are unchanged in going from $\I$ to $\I'$, and the blue weights are moved by at most $3 r_2$. (Note that we don't require in part (b) that the solution to $\I'$ be structured.)

In the rest of this section, we establish (a). Fix a feasible solution $(\B_1, \B_2)$ to $\I$. Thus, (1) $|\B_i| \le k_i$ for $i = 1, 2$, and (2) the point set $Y \subseteq X$ covered by the solution satisfies $\omega_r(Y) \geq m_r$ and $\omega_b(Y) \geq m_b$, (i.e., the solution covers points with total red weight at least $m_r$, and blue weight at least $m_b$.)

	Let $M_1 \subseteq M$ be the set of points $p$ such that some point in $C(p)$ is covered by a ball in $\B_1$.  Now let $M_2 = M \setminus M_1$ be the set of points $p$ such that any point in $C(p)$ is either covered by a ball from $\B_2$, or is an outlier. Let $X_i \coloneqq \bigcup_{p \in M_i} C(p)$ for $i = 1, 2$. Note that $X = X_1 \sqcup X_2$.
	
	We construct a solution $(\B'_1, \B'_2)$ for instance $\I'$ as follows. The set $\B'_1$ is obtained by expanding each ball in $\B_1$ by an additive factor of $6 r_2$. Thus, the balls in $\B'_1$ cover $X_1$. As in the proof of Lemma~\ref{lemma:phase1}, we construct a subset $N \subseteq M_2$ of size at most $|\B_2|$. We let $\B'_2$ consist of the balls of radius $r'_2 = 5 r_2$, each centered at a point in $N$. The set $N$ will have the following properties:
	
\begin{eqnarray}
\omega_r( X_2 \cap \bigcup_{\ball \in \B'_2} \ball ) & \geq & \omega_r(X_2 \cap \bigcup_{\ball \in \B_2} \ball ) \label{eq:red} \\
\sum_{p \in N} wt(p) & \geq & \omega_b(X_2 \cap \bigcup_{\ball \in \B_2} \ball) \label{eq:blue}
 \end{eqnarray}
 
 It is easy to verify that these two guarantees imply that $(\B'_1, \B'_2)$ is a structured, feasible solution to $\I'$:
 
 The red weight covered by $(\B'_1,\B'_2)$ is at least
 \[ \omega_r(X_1) + \omega_r( X_2 \cap \bigcup_{\ball \in \B'_2} \ball ) \geq \omega_r(X_1) + \omega_r( X_2 \cap \bigcup_{\ball \in \B_2} \ball ) \geq m_r.\]
 
 The set $M_1 \cup N$ is self-covered by $(\B'_1, \B'_2)$. We have
 \[ \lambda_b(M_1)  + \lambda_b(N) = \omega_b(X_1) + \sum_{p \in N} wt(p) \geq \omega_b(X_1) + \omega_b(X_2 \cap \bigcup_{\ball \in \B_2} \ball) \geq m_b. \]

 We now describe the construction of $N$ and establish properties (\ref{eq:red}) and (\ref{eq:blue}). At a high level, this is similar to what we did for $M_2^+$ in Lemma~\ref{lemma:phase1}; but it is more involved as we need to ensure that both properties hold.
 
\begin{algorithm}
	\caption{\textsc{Mapping Procedure}$(\widehat{M}, \sigma, \bhat, \{C(p)\}_{p \in \widehat{M}})$} \label{alg:mapping}
	\begin{algorithmic}[1]
		\State Index the points of $\widehat{M}$ as $q_1, q_2, \ldots$ according to the ordering $\sigma$
		\State For every $\ball \in \bhat$, $\varphi(\ball) \coloneqq q_i$, where $q_i \in \widehat{M}$ is the \emph{first} point $q$ s.t. $\ball \cap C(q) \neq \emptyset$
		\State $\ell = 0$; $\mathcal{T} \gets \emptyset$
		\While{there exists a $\ball \in \bhat$ that does not belong to any $D_{j}$ with $j \le \ell$}
		\State $\ell \gets \ell+1$
		\State $q_i \in \widehat{M} \setminus \bigcup_{j = 1}^{\ell-1} N_\ell$ be the first point $q$ with $|\varphi^{-1}(q)| > 0$
		\State \texttt{pending} $ \gets |\varphi^{-1}(q_i)| - 1$
		\State $N_\ell \gets \{q_i\}$, $D_\ell \gets \varphi^{-1}(q_i)$
		\While{$\texttt{pending} > 0$ \textbf{ and } $i+1 \le |\widehat{M}|$}
		\State $i \gets i+1$
		\State $\texttt{pending} \gets \texttt{pending} + |\varphi^{-1}(q_i)|  - 1$
		\State $N_\ell \gets N_\ell \cup \{q_i\}$, $D_\ell \gets D_\ell \cup \varphi^{-1}(q_i)$
		\EndWhile
		\State Add $(N_\ell, D_\ell)$ to $\mathcal{T}$
		\EndWhile
		\State Return $\mathcal{T}$
	\end{algorithmic}
\end{algorithm}

Let $\bhat_2 = \{\ball \in \B_2 \ | \ \ball \cap X_2 \neq \emptyset \}$. 
The set $N$ is obtained via \textsc{Mapping Procedure}, given in Algorithm \ref{alg:mapping}. In particular, we invoke \textsc{Mapping Procedure}($M_2, \sigma, \bhat_2, \{C(p\}_{p \in M_2}$). We describe Algorithm \ref{alg:mapping} at a high level. First, we map every ball in $\bhat_2$ to the \emph{first} (according to $\sigma$) point $q$ in $M_2$ whose cluster $C(q)$ has a non-empty intersection with the ball -- this is the definition of $\varphi$. Now, some points $q \in M_2$ may get mapped by more than one ball. Then, we create a ``grouping procedure'' that creates pairs $(N_\ell, D_\ell)$ as follows. We start from the \emph{first} (according to $\sigma$) point $q_i$ that is mapped by at least one ball. We add $q_i$ to $N_\ell$, and the balls that were mapped to $q_i$ to the set $D_\ell$. Now, if $|\varphi^{-1}(q_i)| > 1$, then we aim to find $|\varphi^{-1}(q_i)| -1$ additional points after $q_i$ to be added to $N_\ell$. Furthermore, it is important in the analysis that these points be \emph{consecutive} according to $\sigma_{|M_2}$. The variable \texttt{pending} keeps track of how many additional distinct points need to be added to $N_\ell$ to match the number of distinct balls in $D_\ell$ at the current time. Thus, if $|\varphi^{-1}(q_i)| > 1$, we add $q_{i+1}$ to $N_\ell$ as well. At this stage, it may happen that $\varphi^{-1}(q_{i+1}) \neq \emptyset$. Then, we add $\varphi^{-1}(q_{i+1})$ to $D_\ell$, and update the variable \texttt{pending} appropriately. If the variable \texttt{pending} becomes $0$, then $|N_\ell| = |D_\ell|$, at which point the inner while loop terminates. By construction, the points added to $N_\ell$ form a contiguous sub-sequence of $\sigma_{|M_2}$. We add the pair $(N_\ell, D_\ell)$ to $\T$. At this point, if there still exists a ball of $\bhat_2$ that does not belong to any $D_j$ with $j \le \ell$, we start the construction of the next pair $(N_{\ell+1}, D_{\ell+1})$. Note that in all but the last iteration of the outer while loop, it holds that $|N_\ell| = |D_\ell|$. However, in the last iteration $t$, the loop may terminate with $|N_t| \le |D_t|$. 

The invocation of \textsc{Mapping Procedure}($M_2, \sigma, \bhat_2, \{C(p\}_{p \in M_2}$) returns 
$\mathcal{T} = \{(N_1, D_1),  (N_2, D_2), \allowbreak \ldots, (N_t, D_t)\}$.  In the following observation, we summarize a few key properties of this collection of pairs.

\begin{observation} \label{obs:mapping-properties}
	$\mathcal{T} = \{(N_1, D_1), (N_2, D_2), \ldots, (N_t, D_t)\}$ satisfies the following properties.
	\begin{enumerate}
		\item For each $1 \leq \ell \leq t$, we have $\emptyset \neq N_\ell \subseteq M_2$; Furthermore, the points of $N_\ell$ form a contiguous subsequence of $M_2$ ordered according to $\sigma$.
The sets $N_1, N_2, \ldots, N_t $ are pairwise disjoint.
		\item For each $1 \leq \ell \leq t$, we have $\emptyset \neq D_\ell \subseteq \bhat_2$. The sets $D_1, D_2, \ldots, D_t$ form a partition of $\bhat_2$.
		\item $|N_\ell| = |D_\ell|$ for $\ell < t$, and $|N_t| \le |D_t|$.
	\end{enumerate}
\end{observation}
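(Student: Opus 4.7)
The plan is to verify each of the three properties by reading off Algorithm~\ref{alg:mapping} (invoked on $M_2$ and $\bhat_2$). The key structural invariant I would maintain across iterations of the outer while loop is $(\ast)$: at the end of iteration $\ell$, every $q \in M_2$ with $\varphi^{-1}(q) \neq \emptyset$ whose $\sigma$-index does not exceed the largest index appearing in $N_\ell$ already belongs to $\bigcup_{j \le \ell} N_j$. Invariant $(\ast)$ would be proved by induction on $\ell$, using that iteration $\ell+1$ starts from the \emph{first} $q \in M_2 \setminus \bigcup_{j \le \ell} N_j$ with $\varphi^{-1}(q) \neq \emptyset$, and that the inner loop only advances $i$ by unit increments.

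For Property~1, I would first observe that each $N_\ell$ is non-empty thanks to the singleton initialization $N_\ell \gets \{q_i\}$, and that contiguity under $\sigma$ is immediate because the inner loop appends $q_{i+1}, q_{i+2}, \ldots$ consecutively. Pairwise disjointness is the subtle point: by $(\ast)$, the starting index of $N_{\ell+1}$ must strictly exceed the largest index of $N_\ell$ (otherwise an earlier candidate with nonempty $\varphi$-preimage would already lie in some $N_j$, $j \le \ell$), so the contiguous blocks cannot overlap.

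For Property~2, I would note that $D_\ell$ is initialized as $\varphi^{-1}(q_i) \neq \emptyset$ by the choice of $q_i$, so it is non-empty. The outer loop's termination condition guarantees that every $\ball \in \bhat_2$ lies in some $D_\ell$. Moreover, each such $\ball$ is contained in $\varphi^{-1}(\varphi(\ball))$, and these preimages are added wholesale to $D_\ell$ exactly when $\varphi(\ball)$ joins $N_\ell$; since each $q \in M_2$ belongs to at most one $N_\ell$ by Property~1, the $D_\ell$ are pairwise disjoint and therefore form a partition of $\bhat_2$.

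For Property~3, I would maintain the running invariant $\texttt{pending} = |D_\ell| - |N_\ell|$ throughout iteration $\ell$. On initialization this reads $|\varphi^{-1}(q_i)|-1$, matching the assignment to \texttt{pending}; each inner-loop pass adds $1$ to $|N_\ell|$, adds $|\varphi^{-1}(q_i)|$ to $|D_\ell|$ for the new $q_i$, and updates \texttt{pending} by $|\varphi^{-1}(q_i)|-1$, preserving the invariant. Exiting with $\texttt{pending}=0$ forces $|N_\ell|=|D_\ell|$. The only alternative exit is $i+1 > |M_2|$, which exhausts all starting-index candidates; by $(\ast)$ no further iteration can begin, so this can occur only at $\ell = t$ and gives $|N_t| \le |D_t|$. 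The main obstacle throughout will be securing the disjointness and partition structure in items~1 and~2, since the inner loop never explicitly checks membership in previously constructed $N_j$; the invariant $(\ast)$ is the clean tool I plan to use to handle this.
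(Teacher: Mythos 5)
Your verification is correct and follows essentially the same route as the paper, which states this observation without a separate proof and treats all three properties as immediate from the description of Algorithm~\ref{alg:mapping} (contiguity of each $N_\ell$, the role of \texttt{pending}, and $|N_\ell|=|D_\ell|$ except possibly in the last iteration). Your invariant $(\ast)$ and the bookkeeping identity $\texttt{pending}=|D_\ell|-|N_\ell|$ simply make explicit the details the paper leaves implicit, and they are argued correctly.
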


Now we prove the following key lemma.
\begin{lemma} \label{lem:blue-greedy}
	For any $1 \leq \ell \leq t$, the following properties hold.
	\begin{enumerate}[label=(\Alph*)]
		\item For any ball $B(c, r_2) \in D_\ell$, there exists a $q \in N_\ell$ such that $B(c, r_2) \subseteq B(q, 5r_2)$.  \label{prop:one}
		\item $\displaystyle \omega_b\lr{ X_2 \cap \bigcup_{B(c, r_2) \in D_\ell} B(c, r_2)} \le \sum_{p \in N_\ell} wt(p) $. \label{prop:two}
	\end{enumerate}
\end{lemma}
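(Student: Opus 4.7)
The plan is to prove (A) by a direct triangle-inequality argument and to prove (B) by reducing to Lemma~\ref{charikar-lemma} applied to an appropriate tail of the greedy execution.

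For (A), fix $B(c, r_2) \in D_\ell$ and let $q = \varphi(B(c, r_2)) \in N_\ell$. By the definition of $\varphi$, there is some $y \in C(q) \cap B(c, r_2)$. By Observation~\ref{greedy:observe}, $d(y, q) \le 3 r_2$, and $d(y, c) \le r_2$ by the choice of $y$, so $d(c, q) \le 4 r_2$. A second application of the triangle inequality gives $d(z, q) \le 5 r_2$ for any $z \in B(c, r_2)$, hence $B(c, r_2) \subseteq B(q, 5 r_2)$.

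For (B), write $N_\ell = \{q_{a_\ell}, \ldots, q_{b_\ell}\}$ as the contiguous block in $M_2$ guaranteed by Observation~\ref{obs:mapping-properties}. The strategy is to imitate the argument of Lemma~\ref{lem:greedy-ordering}: the hypothetical call \textsc{GreedyClustering}$(X_2, X, r_2, 3, \omega_b)$ enumerates the points of $M_2$ in $\sigma$-order with the same clusters as the original run of greedy on $X$; more generally, for $Y^{\ast} \coloneqq X_2 \setminus \bigcup_{j < a_\ell} C(q_j)$, the call \textsc{GreedyClustering}$(Y^{\ast}, X, r_2, 3, \omega_b)$ enumerates the suffix $q_{a_\ell}, \ldots, q_{|M_2|}$ with the same clusters. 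By the definition of $\varphi$, every $B \in D_\ell$ is disjoint from $C(q_j)$ for $j < a_\ell$, so $B \cap X_2 = B \cap Y^{\ast}$. We then invoke Lemma~\ref{charikar-lemma} on this tail greedy with the $|D_\ell|$ balls of $D_\ell$: by Observation~\ref{obs:mapping-properties}, the first $\min\{|D_\ell|, |M_2| - a_\ell + 1\}$ clusters it produces are exactly those indexed by $N_\ell$ (either $|D_\ell| = |N_\ell|$ when $\ell < t$, or $|D_t| \ge |N_t|$ with $N_t$ equal to the entire remaining suffix when $\ell = t$). Lemma~\ref{charikar-lemma} then yields $\sum_{p \in N_\ell} wt(p) \ge \omega_b(Y^{\ast} \cap \bigcup_{B \in D_\ell} B) = \omega_b(X_2 \cap \bigcup_{B \in D_\ell} B)$, which is (B).

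The main subtlety, which will require the most care, is justifying the identification of the tail greedy on $Y^{\ast}$ with the suffix of the greedy on $X_2$. This reduces to the observation that each greedy choice $\arg\max_{q \in X} \omega_b(U \cap B(q, r_2))$ depends only on the current residual $U$; an induction on the greedy step then shows that both executions agree at every step.
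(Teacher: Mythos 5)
Your proposal is correct and takes essentially the same route as the paper: part (A) is the same triangle-inequality computation via $\varphi$, and part (B) invokes Lemma~\ref{charikar-lemma} on an imagined greedy run restricted to a sub-union of the original clusters, just as the paper does (the paper runs greedy on $\mathcal{X}_\ell = X_2 \cap \bigl(\bigcup_{q \in N_\ell} C(q) \cup \bigcup_{B \in D_\ell} B\bigr)$ where you use the tail $Y^{\ast}$, together with the same observation that balls in $D_\ell$ miss all clusters preceding $N_\ell$). One small caution: the residuals of the original run and the restarted run are \emph{not} equal, so the justification should be phrased as in Lemma~\ref{lem:greedy-ordering} --- the winning ball's mass at each original selection lies inside the carved cluster and hence inside $Y^{\ast}$, so restricting the residual cannot change the greedy's choices --- and note that when $\ell = t$ the block $N_t$ need not be the entire remaining suffix (it is only when $|D_t| > |N_t|$), though your argument covers both subcases.
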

\begin{proof}
	
	For any $\ball = B(c, r_2) \in D_\ell$, $q_i = \varphi(\ball) \in N_\ell$. By the definition of $q_i$, it holds that $C(q_i) \cap \ball \neq \emptyset$. Therefore, for any point $p \in \ball$, it holds that $d(p, q_i) \le d(p, c) + d(c, p') + d(p', q_i) \le r_2 + r_2 + 3r_2 = 5r_2$, where $p' \in C(q_i) \cap \ball$. This proves property \ref{prop:one}.
	
	Let $\mathcal{X}_\ell \coloneqq X_2 \cap \lr{\lr{\bigcup_{q \in N_\ell} C(q)} \cup \lr{ \bigcup_{\ball \in D_\ell} \ball } }$. That is, $\mathcal{X}_\ell$ denotes the set of those points in $X_2$ that belong to the clusters of all the points in $N_\ell$, as well as those in the balls in $D_\ell$. Now, imagine calling \textsc{GreedyClustering}$(\mathcal{X}_\ell, X, r_2, 3, \omega_b)$. As in the proof of Lemma \ref{lem:greedy-ordering}, the main observation is that the set of clusters computed in the first $|N_\ell |$ iterations  is exactly $\{C(q)\}_{q \in N_\ell}$. Thus, property \ref{prop:two} in the lemma follows from 
Lemma~\ref{charikar-lemma} applied to \textsc{GreedyClustering}$(\mathcal{X}_\ell, X, r_2, 3, \omega_b)$.
\end{proof}

We now set $N = \bigcup_{1 \leq \ell \leq t} N_{\ell}$. Note that 
\[ |N| = \sum_{\ell} |N_\ell | \leq \sum_{\ell} |D_\ell | = |\bhat_2| \leq |\B_2|. \]
Recall that for instance $\I'$, we set $\B'_2 = \{ B(q, 5r_2) \ | \ q \in N \}.$ We now argue that $N$ satisfies properties (\ref{eq:red}) and (\ref{eq:blue}).

By Property \ref{prop:one} of Lemma \ref{lem:blue-greedy}, we have that for any $\ball \in \B_2$, there is a $\ball' \in \B'_2$ such that $X_2 \cap \ball \subseteq X_2 \cap \ball'$. Thus, $\lr{X_2 \cap \bigcup_{\ball \in \B_2} \ball} \subseteq \lr{X_2 \cap \bigcup_{\ball \in \B'_2} \ball}$, which implies property (\ref{eq:red}).

Using Property \ref{prop:two} of Lemma \ref{lem:blue-greedy}, we have

\[\sum_{p \in N} wt(p)  = \sum_{\ell} \sum_{p \in N_\ell} wt(p)  \geq  \sum_{\ell} \omega_b \left(X_2 \cap \bigcup_{\ball \in D_\ell} \ball \right)  \geq \omega_b \left(X_2 \cap \bigcup_{\ball \in \B_2} \ball \right),\]
which is property (\ref{eq:red}).

\textbf{Phase 2.}
Phase 1 outputs an instance $\I' = ((X, d), (\lambda_r, \lambda_b, m_r, m_b), (k_1, k_2), (r'_1, r'_2) )$ of Colorful $2$-\nukc. In Phase 2, we transform this into an instance $\I'' = ((X,d), (\chi_r, \chi_b, m_r, m_b), (k_1, k_2), (r''_1, 0) )$ of Colorful $2$-\nukc where the radius at the second level is $0$. 

In this step, we call the greedy clustering algorithm (Algorithm \ref{alg:greedyclustering}) using the {\em red weight function} $\lambda_r$. In particular, we will call \textsc{GreedyClustering}($X, X, r'_2, 3, \lambda_r$). This algorithm returns a set of points $M \subseteq X$, where every $p \in M$ has a cluster $C(p)$ and weight $wt(p)$ such that (1) $\{C(p)\}_{p \in M}$ is a partition of $X$, (2) For any $p \in M$, $wt(p) = \lambda_r(C(p))$, the red weight of the cluster, and (3) $d(q, p) \le 3r'_2$ for any $q \in C(p)$. Furthermore, the greedy algorithm naturally defines an ordering $\sigma = \langle p_1, p_2, \ldots, p_{|M|} \rangle$ of $M$ -- this is the order in which the points were added to $M$.

We define the red weight function $\chi_r$ for $\I''$ as follows: $\chi_r(p) \coloneqq \lambda_r(C(p))$ for $p \in M$, and $\chi_r(p) \coloneqq 0$ for $p \in X \setminus M$. 

We define a  $\phi: X \to M $ as follows: $\phi(p)$ is the first point in $M$ (according to $\sigma$) such that $B(p, r'_2) \cap C(p) \neq \emptyset$. Note that $\phi(p)$ exists and $d(p, \phi(p)) \leq 4 r'_2$. We define the blue weight function $\chi_b$ for $\I''$ as follows:  $\chi_b(p) \coloneqq \sum_{q \in \phi^{-1}(p)} \lambda_b(q)$ for $p \in M$, and $\chi_b(p) \coloneqq 0$ for $p \in X \setminus M$. 

Finally, we let $r''_1 = r'_1 + 4r'_2$, and obtain the instance $\I'' = ((X,d), (\chi_r, \chi_b, m_r, m_b), (k_1, k_2), (r''_1, 0) )$ of Colorful $2$-\nukc. The following lemma relates instances $\I'$ and $\I''$.

\begin{lemma}
	\label{lemma:phase2:colorful}
	(a) If instance $\I'$ has a feasible solution that is structured, then the instance $\I''$ has a feasible solution. (b)  Given a solution $(\B''_1, \B''_2)$ for $\I'$ that uses at most $k_i$ balls of radius $\alpha r''_i$ for each $i  \in \{1, 2\}$, we can obtain a solution $(\B'_1, \B'_2)$ for $\I'$ that uses at most $k_i$ balls of radius $\alpha r''_i + 4r'_2 \leq \alpha r'_i + (4 \alpha + 4) r'_2$ for  $i \in \{1,2\}$.
\end{lemma}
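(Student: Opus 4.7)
The plan is to handle part (b) by an additive ball-expansion, and to tackle part (a) by applying the Mapping Procedure of Algorithm~\ref{alg:mapping} with the red weight $\lambda_r$ and then verifying that the resulting radius-$0$ balls automatically pick up the structured blue-weight witnesses.

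For part (b), given $(\B''_1, \B''_2)$ I replace each $B(c, \rho) \in \B''_i$ with $B(c, \rho + 4r'_2) \in \B'_i$. Two distance facts drive the argument: $d(p, q) \leq 3r'_2$ for any $q \in C(p)$, and $d(p, q) \leq 4r'_2$ for any $q \in \phi^{-1}(p)$ (combining $d(q, p') \leq r'_2$ with $d(p', p) \leq 3r'_2$ for a witness $p' \in B(q, r'_2) \cap C(p)$). Whenever a ball $B(c, \rho) \in \B''_i$ covers some $p \in M$, the expanded ball $B(c, \rho+4r'_2) \in \B'_i$ therefore covers both $C(p)$ and $\phi^{-1}(p)$, earning $\chi_r(p) = \lambda_r(C(p))$ red and $\chi_b(p) = \lambda_b(\phi^{-1}(p))$ blue weight in $\I'$. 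Since $\{C(p)\}_{p \in M}$ and $\{\phi^{-1}(p)\}_{p \in M}$ each partition $X$, the resulting red and blue coverages of $(\B'_1, \B'_2)$ in $\I'$ dominate those of $(\B''_1, \B''_2)$ in $\I''$, hence are at least $m_r$ and $m_b$.

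For part (a), fix a structured feasible $(\B'_1, \B'_2)$; let $Z_1 \subseteq X$ denote the points covered by $\B'_1$, let $S$ be the set of centers of the balls of $\B'_2$, so $\lambda_b(Z_1 \cup S) \geq m_b$ by structuredness. Partition $M = M_1 \sqcup M_2$ with $M_1 = \{p \in M : C(p) \cap Z_1 \neq \emptyset\}$, set $X_j = \bigcup_{p \in M_j} C(p)$, and let $\bhat_2 = \{\ball \in \B'_2 : \ball \cap X_2 \neq \emptyset\}$. Define $\B''_1 = \{B(c, r''_1) : B(c, r'_1) \in \B'_1\}$; invoke \textsc{Mapping Procedure}$(M_2, \sigma, \bhat_2, \{C(p)\}_{p \in M_2})$ to obtain pairs $\{(N_\ell, D_\ell)\}_\ell$, set $N = \bigcup_\ell N_\ell \subseteq M_2$, and take $\B''_2 = \{B(p, 0) : p \in N\}$; the algorithm's guarantees give $|N| \leq |\bhat_2| \leq k_2$. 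Red coverage is the straightforward analog of Lemma~\ref{lem:blue-greedy}: each $p \in M_1$ is covered by $\B''_1$ (via $r'_1 + 3r'_2 \leq r''_1$), so $\chi_r(M_1) = \lambda_r(X_1)$; and the red-weight version of Lemma~\ref{lem:blue-greedy}, whose proof carries over verbatim with $\lambda_r$ in place of $\omega_b$, yields $\chi_r(N) \geq \lambda_r(X_2 \cap \bigcup_{\ball \in \bhat_2} \ball)$. Since $X_2 \cap Z_1 = \emptyset$ by the definition of $M_2$, the points covered by $(\B'_1, \B'_2)$ in $\I'$ lie in $X_1 \cup (X_2 \cap \bigcup_{\ball \in \bhat_2} \ball)$, so the two contributions sum to at least $m_r$.

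The delicate step, and the main obstacle, is blue coverage. Let $M^* \subseteq M$ be the set of points covered by $(\B''_1, \B''_2)$ in $\I''$; it suffices to establish $\phi(Z_1 \cup S) \subseteq M^*$, for then $\chi_b(M^*) = \lambda_b(\phi^{-1}(M^*)) \geq \lambda_b(Z_1 \cup S) \geq m_b$, using that $\{\phi^{-1}(p)\}_{p \in M}$ partitions $X$. For $y \in Z_1$ covered by $B(c', r'_1) \in \B'_1$, the bound $d(c', \phi(y)) \leq r'_1 + 4r'_2 = r''_1$ places $\phi(y)$ in the expanded ball of $\B''_1$ centered at $c'$. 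For $c \in S$ I split into cases. If $B(c, r'_2) \subseteq X_1$, then every cluster meeting $B(c, r'_2)$ lies in $X_1$ and hence corresponds to an element of $M_1$, so in particular $\phi(c) \in M_1 \subseteq M^*$. Otherwise $B(c, r'_2) \cap X_2 \neq \emptyset$ so $B(c, r'_2) \in \bhat_2$; if $\phi(c) \in M_1$ we are done, and if $\phi(c) \in M_2$ I claim $\phi(c) = \varphi(B(c, r'_2))$, where $\varphi$ is the mapping from Algorithm~\ref{alg:mapping}. Indeed, $\phi(c) \in M_2$ forces no element of $M$ earlier than $\phi(c)$ in $\sigma$ to meet $B(c, r'_2)$, so no earlier element of $M_2$ does either, making $\phi(c)$ the first $M_2$-element whose cluster meets $B(c, r'_2)$ --- precisely $\varphi(B(c, r'_2))$. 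Since the Mapping Procedure's construction of the $(N_\ell, D_\ell)$ guarantees $\varphi(\bhat_2) \subseteq N$, we conclude $\phi(c) \in N \subseteq M^*$, completing the proof.
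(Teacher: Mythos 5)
Your proof is correct and takes essentially the same route as the paper: the additive $4r'_2$ ball expansion for part (b), and for part (a) the same $M_1/M_2$ split, the \textsc{Mapping Procedure} with the red-weight analogue of Lemma~\ref{lem:blue-greedy} giving $N$ with $|N|\le|\B'_2|$, and the map $\phi$ to certify blue coverage. The only (immaterial) difference is that you define $M_1$ via clusters meeting the set covered by $\B'_1$ rather than via the $4r'_2$-neighborhood used in the paper, and you compensate by checking directly that $\phi(y)$ lies within $r''_1 = r'_1 + 4r'_2$ of a $\B'_1$-center whenever $y$ is covered by $\B'_1$.
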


Again, part (b) follows from the fact that in constructing $\I''$ from $\I'$, we move weights by a distance of at most $4 r'_2$. Note that we do not claim that the solution to $\I'$ constructed in part (b) is structured.

In the rest of this section, we establish part (a). Fix a feasible solution $(\B'_1, \B'_2)$ for $\I'$ that is also structured. Our construction of a feasible solution for $\I'$ is analogous to what we did in Phase 1. 

Let $M_1 \subseteq M$ be the set of points $p$ such that there exists some point $x$ satisfying (i) $x$ is covered by a ball in $\B'_1$, and (ii) $d(x,p) \leq 4 r'_1$. Note that $M_1$ includes any $p \in M$ such $C(p)$ contains a point covered by a ball in $\B'_1$. Now let $M_2 = M \setminus M_1$; note that for $p \in M_2$, any point in $C(p)$ is either covered by a ball from $\B'_2$, or is an outlier. Let $X_i \coloneqq \bigcup_{p \in M_i} C(p)$ for $i = 1, 2$. Note that $X = X_1 \sqcup X_2$.

Let $\bhat'_2 = \{ \ball \in \B'_2 \ | \ \ball \cap X_2 \neq \emptyset \}.$ We invoke \textsc{Mapping Procedure}($M_2, \sigma, \bhat'_2, \{C(p\}_{p \in M_2}$) and  $\mathcal{T} = \{(N_1, D_1), (N_2, D_2), \ldots, (N_t, D_t)\}$. We let $N = \bigcup_{1 \leq \ell \leq t} N_{\ell}$.

As in phase 1, we have that $|N| \leq |\bhat'_2| \leq | \B'_2 |$. The set $N$ also satisfies the following property, which is the analog of Property \ref{eq:blue}.

\begin{equation}
\sum_{p \in N} wt(p)  \geq  \lambda_r(X_2 \cap \bigcup_{\ball \in \B'_2} \ball) \label{eq:blue-red}
\end{equation}

We now construct a solution $(\B''_1, \B''_2)$ for $\I''$. The set $\B''_1$ is obtained by expanding each ball in
$\B'_1$ by an additive $4 r'_2$; each ball in $\B''_1$ has radius $r''_1$. Note that by definition of $M_1$, the balls in $\B''_1$ cover $M_1$. The set $\B''_2$ is obtained by including in it a ball of radius $0$ at each point in $N$. Note that $|\B''_2| = |N|  \leq |\B'_2|$. 

We now argue that $(\B''_1, \B''_2)$ provides adequate coverage. Red coverage is analogous to blue coverage in phase 1, using property \ref{eq:blue-red}:
\[ \chi_r(M_1)  + \chi_r(N) = \lambda_r(X_1) + \sum_{p \in N} wt(p) \geq \lambda_r(X_1) + \lambda_r(X_2 \cap \bigcup_{\ball \in \B'_2} \ball) \geq m_r. \]

For blue coverage, let $Y \subseteq X$ denote the set of points {\em self-covered} by the structured, feasible solution $(\B'_1, \B'_2)$ with $\lambda_b(Y) \geq m_b$. We argue that for each $y \in Y$, we have $\phi(y) \in M_1 \cup N$. If $y$ is covered by a ball in $\B'_1$, then as $d(y, \phi(y)) \leq 4 r'_2$, we conclude that $\phi(y) \in M_1$ using the definition of $M_1$. Otherwise, $y$ is the center of some ball in $B(y,r'_2) \in \B'_2$. Assume $\phi(y) \not\in M_1$. Then by the definition of $\phi$, $\phi(y)$ is the first point $p \in M_2$ such that $B(y, r'_2)$ intersects $C(p)$. But this means $\phi(y)$ is the same as $\varphi( B(y, r'_2) )$ computed in \textsc{Mapping Procedure}($M_2, \sigma, \bhat'_2, \{C(p\}_{p \in M_2}$). Thus, $B(y, r'_2) \in D_{\ell}$ and $\phi(y) \in N_{\ell}$ for some pair
$(N_{\ell}, D_{\ell})$ in $\mathcal{T}$. We conclude $\phi(y) \in N = \bigcup_{\ell} N_\ell$.

Thus, the blue coverage of $(\B''_1, \B''_2)$ is at least
\[\chi_b(M_1) + \chi_b(N) \geq \sum_{p \in M_1 \cup N}  \phi^{-1}(p) \geq \sum_{y \in Y} \lambda_b(y) \geq m_b.\]

This completes the proof of Lemma~\ref{lemma:phase2:colorful} and concludes our description of Phase 2. Combining Phase 1 and Phase 2, we conclude with the main result of this section.

\begin{theorem}
\label{theorem:self-coverage}
There is a polynomial-time algorithm that transforms a Colorful $2$-\nukc instance $\I = ((X, d), (\omega_r, \omega_b, m_r, m_b), (k_1, k_2), (r_1, r_2) )$ into an instance $\I''= ((X,d), (\chi_r, \chi_b, m_r, m_b), (k_1, k_2), (r''_1, 0) )$ of Colorful $2$-\nukc with $r''_1 = r_1 + 26 r_2$, and has the following properties: (a) If $\I$ has a feasible solution, then so does $\I''$; (b) Given an $\alpha$-approximate solution to $\I''$, we can construct, in polynomial time, a $c \cdot \alpha$-approximate solution to $\I$, where $c > 0$ is an absolute constant.
\end{theorem}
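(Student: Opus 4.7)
The plan is to obtain Theorem~\ref{theorem:self-coverage} by simply composing Phase~1 and Phase~2, i.e., chaining Lemma~\ref{lemma:phase1:colorful} with Lemma~\ref{lemma:phase2:colorful}. Concretely, given the input instance $\I$, I would first invoke the Phase~1 construction to produce the intermediate instance $\I' = ((X,d),(\omega_r,\lambda_b,m_r,m_b),(k_1,k_2),(r'_1,r'_2))$ with $r'_1=r_1+6r_2$ and $r'_2=5r_2$, and then feed $\I'$ into the Phase~2 construction to produce the final instance $\I'' = ((X,d),(\chi_r,\chi_b,m_r,m_b),(k_1,k_2),(r''_1,0))$ with $r''_1 = r'_1 + 4r'_2$. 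Substituting the value of $r'_1$ and $r'_2$ immediately gives the stated radius $r''_1 = r_1 + 6r_2 + 4\cdot 5 r_2 = r_1 + 26 r_2$, so the arithmetic in the theorem statement checks out.

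For part~(a), I would argue by forward implication along the two phases. If $\I$ has a feasible solution, then Lemma~\ref{lemma:phase1:colorful}(a) yields a feasible solution to $\I'$ that is additionally \emph{structured}, which is precisely the hypothesis required by Lemma~\ref{lemma:phase2:colorful}(a). Applying the latter then gives a feasible solution to $\I''$, as desired. The key reason the composition works is that Phase~1 is designed to produce not just feasibility but also the self-coverage (structured) property on the blue side, and Phase~2 consumes exactly this structured guarantee when building the self-covering solution on the red side.

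For part~(b), I would chain the two reverse directions. Starting with an $\alpha$-approximate solution to $\I''$, Lemma~\ref{lemma:phase2:colorful}(b) produces a solution to $\I'$ using $k_i$ balls of radius at most $\alpha r'_i + (4\alpha+4)r'_2$ for $i\in\{1,2\}$. Since $r'_2 = 5 r_2$ is a constant multiple of the original $r_2 \le r_1$, this is a $c_1\alpha$-approximation to $\I'$ for some absolute constant $c_1$. Feeding this into Lemma~\ref{lemma:phase1:colorful}(b) with approximation factor $c_1\alpha$ gives a solution to $\I$ using $k_i$ balls of radius at most $c_1\alpha r_i + (6c_1\alpha+3)r_2$, which is a $c\cdot\alpha$-approximation for an absolute constant $c$. (As noted in Section~\ref{sec:main-algo}, we may assume $r_1 \ge \beta r_2$ for a large constant $\beta$, or simply absorb the additive $O(r_2)$ terms into the multiplicative approximation factor since $r_2\le r_i$ for both $i$.)

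There is no real obstacle beyond bookkeeping: the hard work is already packaged inside Lemmas~\ref{lemma:phase1:colorful} and~\ref{lemma:phase2:colorful}. The only thing to be careful about is matching the interface between the phases, namely that Phase~1 outputs a \emph{structured} feasible solution, which is exactly what Phase~2 consumes in its hypothesis. Once this matching is verified and the radius bound $r''_1=r_1+26r_2$ is computed, the theorem follows directly.
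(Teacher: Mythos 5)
Your proposal is correct and matches the paper's own proof, which obtains Theorem~\ref{theorem:self-coverage} precisely by chaining Lemma~\ref{lemma:phase1:colorful} and Lemma~\ref{lemma:phase2:colorful}, with the \emph{structured} feasible solution from Phase~1 serving as the hypothesis consumed by Phase~2 and the radius bound $r''_1 = (r_1+6r_2) + 4\cdot 5r_2 = r_1 + 26r_2$. Your handling of part~(b), absorbing the additive $O(\alpha r_2)$ terms into the multiplicative constant since $r_2 \le r_i$, is also the intended argument.
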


\begin{remark} In part (a), the feasible solution $(\B''_1, \B''_2)$ to $\I'$ that is constructed from feasible solution $(\B_1, \B_2)$ to $\I$ has the following useful property: for any $\ball \in \B''_1$, the center of $\ball$ is also the center of some ball in $\B'_1$.
\end{remark}

\section{Solving Well-Separated Colorful $2$-NUkC} \label{sec:dp}

We assume that we are given a \emph{well-separated} instance $\I = ((X, d), (\omega_r, \omega_b, m_r, m_b) (k_1, k_2), (r_1, 0))$ of Colorful $2$-\nukc. The \emph{well-separatedness} of the instance comes with the following additional input and restriction -- we are given an additional set $Y \subseteq X$ as an input. The set $Y$ is \emph{well-separated}, i.e., for any $u, v \in Y$, $d(u, v) > 2 r_1$. The additional restriction is that, the set of centers of balls of radius $r_1$ must be chosen from the set $Y$. We sketch how to solve such an instance optimally in polynomial time using dynamic programming.

Let $z \coloneqq |Y|$, and let $Y = \{y_1, y_2, \ldots, y_z\}$. For $1 \le i \le z$, let $X_i \coloneqq B(y_i, r_1) \cap X$, and let $X_{z+1} \coloneqq X \setminus \lr{\bigcup_{1 \le i \le z} X_i}$. Note that $\{X_i\}_{1 \le i \le z+1}$ is a partition of $X$. 

For any $X' \subseteq X$ and non-negative integers $k, n_r, n_b $, let $F(X', k, n_r, n_b)$ be \textbf{true} if there exists a subset $X'' \subseteq X'$ of size at most $k$, and (red, blue) weight at least $(n_r, n_b)$; and \textbf{false} otherwise.\footnote{We use \emph{$X''$ has (red, blue) weight at least $(n_r, n_b)$} as shorthand for $\omega_r(X'') \geq n_r$ and $\omega_b(X'') \geq n_b$.} For a particular subset $X'$, the value of $F(X', k, n_r, n_b)$ can be found in polynomial time using dynamic programming, since the values $k, n_r, n_b$ are at most $n$.

For $(1, 0, 0, 0, 0) \le (i, k'_1, k'_2, n_r, n_b) \le (z+1, k_1, k_2, m_r, m_b)$, let $G(i, k'_1, k'_2, n_r, n_b)$ be \textbf{true} if it is possible to obtain (red, blue) coverage of at least $(n_r, n_b)$ from the set of points $\bigcup_{1 \le j \le i} X_j$, using at most $k'_1$ balls of radius $r_1$ and $k'_2$ balls of radius $0$;  and \textbf{false} otherwise. Note that if $G(i-1, k'_1, k'_2, n_r, n_b) = \textbf{true}$, then $G(i, k'_1, k'_2, n_r, n_b)$ is trivially \textbf{true}. Otherwise, suppose some points in $X_i$ are covered. We consider two possibilities: either (A) $X_i$ is covered using a ball of radius $r_1$ (note that for $i \le z$ this is possible by definition; for $i = z+1$ we omit this case), and the remaining (red, blue) coverage comes from $\bigcup_{1 \le j \le i-1} X_j$, or (B) We use some $1 \le t \le \min\{k'_2, |X_i|\}$ balls of radius $0$ to achieve the (red, blue) coverage of $(n'_r, n'_b)$ from within $X_i$, and the remaining (red, blue) coverage comes from $\bigcup_{1 \le j \le i-1} X_j$. Note that in case (B), for a fixed guess of $(t, n'_r, n'_b)$, the subproblem for $X_i$ corresponds to $F(X_i, t, n'_r, n'_b)$ as defined in the previous paragraph, and can be solved in polynomial time. It is straightforward to convert this recursive argument to compute $G(z+1, k_1, k_2, m_r, m_b)$ into a dynamic programming algorithm that also finds a feasible solution, and it can be implemented in polynomial time. We omit the details.

\section{From Robust $t$-\nukc to Well-Separated Robust $t$-\nukc} \label{sec:t-to-robust}

In this section, we use the round-or-cut framework of \cite{ChakrabartyN21} to give a Turing reduction from Robust $t$-\nukc to (polynomially many instances of) \emph{Well-Separated} Robust $t$-\nukc. Furthermore, $c$-approximation for a feasible instance of the latter problem will imply an $O(c)$-approximation for the original instance of Robust $t$-\nukc. 

\paragraph{Round-or-Cut Framework.} Let $\I= ((X, d), (\mathbbm{1}, m), (k_1, k_2, \ldots, k_t), (r_1, r_2, \ldots, r_t))$ be the given instance of Robust $t$-\nukc (we assume that we are working with unit-weight instance, where we want to cover at least $m$ points of $X$). We adopt the round-or-cut framework of \cite{ChakrabartyN21} (also \cite{chakrabarty2018generalized}) to separate an LP solution from the integer hull of coverages (see Section \ref{sec:setup} in the appendix for the definitions thereof). Even though \cite{ChakrabartyN21} discuss this for $t = 2$, it easily generalizes to arbitrary $t \ge 2$. Thus, we only sketch the high level idea. 

Let $\cov = (\cov_1, \cov_2, \ldots, \cov_t: \forall v \in X)$ be a candidate solution returned by the ellipsoid algorithm. First, we check whether $\cov(X) \ge m$, and report as the separating hyperplane if this does not hold. Now, we call CGK Algorithm (see Section \ref{sec:setup}) with $\alpha_1 = 6$, and $\alpha_i = 2$ for all $2 \le i \le t$ to get a $t$-FF instance $(\T = ((L_1, \ldots, L_t), (a_1, \ldots, a_t), \mathsf{Leaf}, w), (k_1, \ldots, k_t))$. Here, for any $i \in [t]$, any distinct $p, q \in L_i$ satisfy that $d(p, q) > 3r_i$. Then, we let $\{y_v: v \in \bigcup_{i} L_i\}$ be the solution as defined in Section \ref{sec:setup}, see Definition \ref{def:soln-y}. Now we check if $\cov_i(L_i) \le k_i$ for $i \in [t]$, and report if any of these $t$ inequalities is not satisfied. Finally, the algorithm checks the value of $y(L_1)$, and branches into the following two cases.

In the first case, if $y(L_1) \le k_1 - t$, then as argued by \cite{ChakrabartyN21}, it can be shown that a sparse LP that is related to the $t$-FF problem (see Definitions \ref{def:t-ff} and \ref{def:sparse-lp}) admits an \emph{almost-integral} solution. That is, a basic feasible solution to the sparse LP contains at most $t$ strictly fractional variables. By rounding up all such variables to $1$, one can obtain an $O(1)$-approximation for the original instance $\I$. Note that here we need the assumption that the ratio between the values of consecutive radii is at least $\beta$ -- otherwise we can merge the two consecutive radii classes into a single class.

In the second case, $y(L_1) > k_1 - t$. In this case, we use a generalization of an argument from \cite{ChakrabartyN21} as follows. We enumerate every subset $Q \subseteq X$ of size at most $t-1$, and add a ball of radius $r_1$ around each point in $Q$. Let $X'$ be the set of points covered by balls of radius $r_1$ around $Q$. Then, we modify the weight of the points of $X'$ to be $0$, and let $\mathbbm{1}_{X\setminus X'}$ be the resulting weight function. Let $\I(Q) = ((X, d), (\mathbbm{1}_{X \setminus X'}, m-|X'|), (2r_1, r_2, \ldots, r_t), (k_1 - |Q|, k_2, \ldots, k_t))$ be the resulting residual instance of \emph{Well-Separated} $t$-\nukc, where the \emph{well-separatedness} property imposes that the $2r_1$ centers must be chosen from $Y \coloneqq L_1 \setminus Q$ -- note that the distance between any two distinct points in $L_1$, and thus $Y$, is at least $6r_1 = 3 \cdot 2r_1$, i.e., the set $Y$ is well-separated w.r.t. the new radius $r_1$. An argument from \cite{ChakrabartyN21} implies that if $\I$ is feasible, then either (a) at least one of the well-separated instances $\I(Q)$ is feasible for some $Q \subseteq X$ of size at most $t-1$, or (b) the hyperplane $y(L_1) \le k_1 -t$ separates the LP solution $\cov$ from the integer hull of coverages. Furthermore, an argument from \cite{ChakrabartyN21} implies that a constant approximation to any of the instances implies a constant approximation to $\mathcal{I}$. 

Note that the ellipsoid algorithm terminates in polynomially many iterations, and each iteration produces at most $n^{t}$ instances of \emph{Well-Separated} Robust $t$-\nukc. Thus, we get the following theorem.

\begin{theorem} \label{theorem:robust-to-ws}
	Suppose there exists an algorithm that, given an instance $\mathcal{J}$ of \emph{Well-Separated} Robust $t$-\nukc, in time $f(n, t)$, either finds an $\alpha$-approximation to $\mathcal{J}$, or correctly determines that $\mathcal{J}$ is not feasible. Then, there exists an algorithm to obtain an $c \cdot \alpha$-approximation for any instance of Robust $t$-\nukc, running in time $n^{O(t)} \cdot f(n, t)$. 
\end{theorem}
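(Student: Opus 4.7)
The plan is to prove the theorem by driving the ellipsoid method in the round-or-cut framework of Chakrabarty--Negahbani, where the target convex body is the integer hull $\polycov$ of coverage vectors coming from true solutions to $\I$. At each iteration, the ellipsoid produces a candidate fractional coverage vector $\cov = (\cov_1, \ldots, \cov_t)$, and I would equip it with a separation oracle whose outcomes are of three possible types: (i) a rounded integer solution that is an $O(\alpha)$-approximation for $\I$, (ii) one of polynomially many Well-Separated Robust $t$-\nukc instances that is guaranteed to inherit feasibility from $\I$, or (iii) a hyperplane separating $\cov$ from $\polycov$. Outcomes (i) and (ii) let me terminate with an approximate solution (using the hypothesised $\alpha$-approximation algorithm on the Well-Separated instance in case (ii)), while repeated outcomes of type (iii) force the ellipsoid to halt in polynomially many iterations since the bit complexity of $\polycov$ is polynomial, at which point $\I$ must be infeasible.

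The oracle itself follows the template sketched in the section. First I would check the trivial constraint $\cov(X) \ge m$ and return it as the separating hyperplane if violated. Next I would call the CGK algorithm with expansion parameters $\alpha_1 = 6$ and $\alpha_i = 2$ for $2 \le i \le t$ to obtain a $t$-FF tree in which any two distinct points of each level $L_i$ lie at distance more than $3r_i$, together with the derived fractional solution $\{y_v\}$; the constraints $\cov_i(L_i) \le k_i$ provide further separating hyperplanes if violated. I would then branch on $y(L_1)$. If $y(L_1) \le k_1 - t$, I would invoke the sparse LP rounding from \cite{ChakrabartyN21} (whose basic-feasible-solution argument lifts to general $t$ because it yields at most $t$ strictly fractional variables), rounding each such variable up to obtain an $O(1)$-approximation, after assuming, without loss of generality, that consecutive radii classes differ by a large constant factor $\beta$, which we may enforce by collapsing adjacent radii at a bounded cost.

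In the complementary case $y(L_1) > k_1 - t$, I would enumerate every $Q \subseteq X$ with $|Q| \le t-1$ and build the residual instance $\I(Q)$ described in the section: zero the weight of points covered by balls of radius $r_1$ around $Q$, reduce $m$ and $k_1$ accordingly, replace the top radius by $2r_1$, and restrict the top-level centers to lie in $Y \coloneqq L_1 \setminus Q$, which is $2r_1$-well-separated since $L_1$ was $6r_1$-separated. I would then call the hypothesised algorithm on each of the at most $n^{t-1}$ residual instances. If any returns an $\alpha$-approximate solution, I combine it with the $|Q|$ balls centered at $Q$ to obtain a $c \cdot \alpha$-approximation for $\I$. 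If all return \textbf{infeasible}, I would report $y(L_1) \le k_1 - t$ as the separating hyperplane; the correctness of this step is the technical core I would import from \cite{ChakrabartyN21}, in the contrapositive form that any integral feasible solution to $\I$, after relocating its top-level centers to nearby leaves of $L_1$, corresponds to some $Q$ of size at most $t-1$ for which $\I(Q)$ is feasible.

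Finally, for the runtime, the ellipsoid method converges in polynomially many rounds, and each round performs $n^{O(t)}$ enumeration work plus one call to the hypothesised algorithm per residual instance, giving total running time $n^{O(t)} \cdot f(n,t)$. The main obstacle I anticipate is not conceptual but one of verification: I must check that the two ingredients imported from \cite{ChakrabartyN21}, which are stated there only for $t = 2$, indeed generalise uniformly in $t$. Specifically, (a) the sparse LP must retain its bounded-fractional-support property in the low-$y(L_1)$ branch for arbitrary $t$, and (b) the argument that infeasibility of every $\I(Q)$ forces $\cov \notin \polycov$ must extend from guessing one ``unlucky'' top-level center to guessing up to $t-1$ of them. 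Both extensions are asserted but not detailed in the excerpt, so the heart of the proof is to make these two verifications explicit.
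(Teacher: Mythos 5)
Your proposal follows essentially the same route as the paper's own argument: the round-or-cut separation oracle checking $\cov(X) \ge m$ and $\cov_i(L_i) \le k_i$, the CGK call with $\alpha_1 = 6$, $\alpha_i = 2$, the branch on $y(L_1)$ versus $k_1 - t$ with sparse-LP rounding in one case and enumeration of sets $Q$ of size at most $t-1$ (yielding well-separated residual instances with top radius $2r_1$ and centers in $L_1 \setminus Q$) in the other, and the same $n^{O(t)} \cdot f(n,t)$ accounting. The two verifications you flag as needing to be made explicit (the at-most-$t$ fractional variables in the sparse LP and the contrapositive feasibility argument for the instances $\I(Q)$) are exactly the points the paper also asserts as generalizations of \cite{ChakrabartyN21} without re-proving them, so your treatment matches the paper's in both substance and level of detail.
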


\bibliography{references}

\begin{thebibliography}{13}
\providecommand{\natexlab}[1]{#1}
\providecommand{\url}[1]{\texttt{#1}}
\expandafter\ifx\csname urlstyle\endcsname\relax
  \providecommand{\doi}[1]{doi: #1}\else
  \providecommand{\doi}{doi: \begingroup \urlstyle{rm}\Url}\fi

\bibitem[Anegg et~al.(2020)Anegg, Angelidakis, Kurpisz, and
  Zenklusen]{AneggAKZ20}
Georg Anegg, Haris Angelidakis, Adam Kurpisz, and Rico Zenklusen.
\newblock A technique for obtaining true approximations for k-center with
  covering constraints.
\newblock In Daniel Bienstock and Giacomo Zambelli, editors, \emph{Integer
  Programming and Combinatorial Optimization - 21st International Conference,
  {IPCO} 2020, London, UK, June 8-10, 2020, Proceedings}, volume 12125 of
  \emph{Lecture Notes in Computer Science}, pages 52--65. Springer, 2020.
\newblock \doi{10.1007/978-3-030-45771-6\_5}.
\newblock URL \url{https://doi.org/10.1007/978-3-030-45771-6\_5}.

\bibitem[Bandyapadhyay(2020)]{Bandyapadhyay20nukc}
Sayan Bandyapadhyay.
\newblock On perturbation resilience of non-uniform k-center.
\newblock In Jaroslaw Byrka and Raghu Meka, editors, \emph{Approximation,
  Randomization, and Combinatorial Optimization. Algorithms and Techniques,
  {APPROX/RANDOM} 2020, August 17-19, 2020, Virtual Conference}, volume 176 of
  \emph{LIPIcs}, pages 31:1--31:22. Schloss Dagstuhl - Leibniz-Zentrum
  f{\"{u}}r Informatik, 2020.
\newblock \doi{10.4230/LIPIcs.APPROX/RANDOM.2020.31}.
\newblock URL \url{https://doi.org/10.4230/LIPIcs.APPROX/RANDOM.2020.31}.

\bibitem[Bandyapadhyay et~al.(2019)Bandyapadhyay, Inamdar, Pai, and
  Varadarajan]{Bandyapadhyay0P19}
Sayan Bandyapadhyay, Tanmay Inamdar, Shreyas Pai, and Kasturi~R. Varadarajan.
\newblock A constant approximation for colorful k-center.
\newblock In Michael~A. Bender, Ola Svensson, and Grzegorz Herman, editors,
  \emph{27th Annual European Symposium on Algorithms, {ESA} 2019, September
  9-11, 2019, Munich/Garching, Germany}, volume 144 of \emph{LIPIcs}, pages
  12:1--12:14. Schloss Dagstuhl - Leibniz-Zentrum f{\"{u}}r Informatik, 2019.
\newblock \doi{10.4230/LIPIcs.ESA.2019.12}.
\newblock URL \url{https://doi.org/10.4230/LIPIcs.ESA.2019.12}.

\bibitem[Chakrabarty and Negahbani(2018)]{chakrabarty2018generalized}
Deeparnab Chakrabarty and Maryam Negahbani.
\newblock Generalized center problems with outliers.
\newblock In \emph{45th International Colloquium on Automata, Languages, and
  Programming (ICALP 2018)}. Schloss Dagstuhl-Leibniz-Zentrum fuer Informatik,
  2018.

\bibitem[Chakrabarty and Negahbani(2021)]{ChakrabartyN21}
Deeparnab Chakrabarty and Maryam Negahbani.
\newblock Robust k-center with two types of radii.
\newblock In Mohit Singh and David~P. Williamson, editors, \emph{Integer
  Programming and Combinatorial Optimization - 22nd International Conference,
  {IPCO} 2021, Atlanta, GA, USA, May 19-21, 2021, Proceedings}, volume 12707 of
  \emph{Lecture Notes in Computer Science}, pages 268--282. Springer, 2021.
\newblock \doi{10.1007/978-3-030-73879-2\_19}.
\newblock URL \url{https://doi.org/10.1007/978-3-030-73879-2\_19}.

\bibitem[Chakrabarty et~al.(2020)Chakrabarty, Goyal, and Krishnaswamy]{CGK20}
Deeparnab Chakrabarty, Prachi Goyal, and Ravishankar Krishnaswamy.
\newblock The non-uniform \emph{k}-center problem.
\newblock \emph{{ACM} Trans. Algorithms}, 16\penalty0 (4):\penalty0
  46:1--46:19, 2020.
\newblock \doi{10.1145/3392720}.
\newblock URL \url{https://doi.org/10.1145/3392720}.

\bibitem[Charikar et~al.(2001)Charikar, Khuller, Mount, and
  Narasimhan]{charikar2001algorithms}
Moses Charikar, Samir Khuller, David~M Mount, and Giri Narasimhan.
\newblock Algorithms for facility location problems with outliers.
\newblock In \emph{Proceedings of the twelfth annual ACM-SIAM symposium on
  Discrete algorithms}, pages 642--651. Society for Industrial and Applied
  Mathematics, 2001.

\bibitem[Harris et~al.(2017)Harris, Pensyl, Srinivasan, and
  Trinh]{harris2017lottery}
David~G Harris, Thomas Pensyl, Aravind Srinivasan, and Khoa Trinh.
\newblock A lottery model for center-type problems with outliers.
\newblock In \emph{Approximation, Randomization, and Combinatorial
  Optimization. Algorithms and Techniques (APPROX/RANDOM 2017)}. Schloss
  Dagstuhl-Leibniz-Zentrum fuer Informatik, 2017.

\bibitem[Hochbaum and Maass(1985)]{hochbaumM1985approximation}
Dorit~S Hochbaum and Wolfgang Maass.
\newblock Approximation schemes for covering and packing problems in image
  processing and vlsi.
\newblock \emph{Journal of the ACM (JACM)}, 32\penalty0 (1):\penalty0 130--136,
  1985.

\bibitem[Hochbaum and Shmoys(1985)]{hochbaumS1985best}
Dorit~S Hochbaum and David~B Shmoys.
\newblock A best possible heuristic for the k-center problem.
\newblock \emph{Mathematics of operations research}, 10\penalty0 (2):\penalty0
  180--184, 1985.

\bibitem[Inamdar and Varadarajan(2020)]{inamdar2020capacitated}
Tanmay Inamdar and Kasturi Varadarajan.
\newblock Capacitated sum-of-radii clustering: An fpt approximation.
\newblock In \emph{28th Annual European Symposium on Algorithms (ESA 2020)}.
  Schloss Dagstuhl-Leibniz-Zentrum f{\"u}r Informatik, 2020.

\bibitem[Jia et~al.(2020)Jia, Sheth, and Svensson]{JiaSS20}
Xinrui Jia, Kshiteej Sheth, and Ola Svensson.
\newblock Fair colorful k-center clustering.
\newblock In Daniel Bienstock and Giacomo Zambelli, editors, \emph{Integer
  Programming and Combinatorial Optimization - 21st International Conference,
  {IPCO} 2020, London, UK, June 8-10, 2020, Proceedings}, volume 12125 of
  \emph{Lecture Notes in Computer Science}, pages 209--222. Springer, 2020.
\newblock \doi{10.1007/978-3-030-45771-6\_17}.
\newblock URL \url{https://doi.org/10.1007/978-3-030-45771-6\_17}.

\bibitem[Jia et~al.(2021)Jia, Rohwedder, Sheth, and Svensson]{jia2021towards}
Xinrui Jia, Lars Rohwedder, Kshiteej Sheth, and Ola Svensson.
\newblock Towards non-uniform k-center with constant types of radii.
\newblock \emph{arXiv preprint arXiv:2110.02688}, 2021.

\end{thebibliography}

\appendix

\section{From $(t+1)$-NUkC to Robust $t$-NUkC} \label{sec:t-nukc-to-robust}

In this section, we show an approximate equivalence of $t+1$-NUkC and Robust $t$-NUkC. Note that Jia et al.\ \cite{jia2021towards} recently showed a very similar result. However, our proof is slightly different from theirs, and we describe it here for the sake of completeness.

\begin{lemma}\label{lem:tplus1-to-t}\ 
	\begin{enumerate}
		\item Suppose there exists an $\alpha$-approximation algorithm for $(t+1)$-NUkC. Then, there exists an $\alpha$-approximation algorithm for unweighted Robust $t$-NUkC.
		\item Suppose there exists a $\beta$-approximation algorithm for unweighted Robust $t$-NUkC. Then there exists a $3\beta+2$-approximation algorithm for $(t+1)$-NUkC.
	\end{enumerate}
\end{lemma}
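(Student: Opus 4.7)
The plan is to handle the two directions separately. For part 1, I would simulate the $|X| - m$ outliers by an extra radius-$0$ class: given the Robust $t$-\nukc instance $\mathcal{I} = ((X, d), (\mathbbm{1}, m), (k_1, \ldots, k_t), (r_1, \ldots, r_t))$, construct the $(t+1)$-\nukc instance $((X, d), (k_1, \ldots, k_t, |X|-m), (r_1, \ldots, r_t, 0))$. Feasibility of the latter follows by placing a singleton ball at each outlier. Conversely, in any $\alpha$-approximate solution, the at most $|X|-m$ balls at level $t+1$ have radius $\alpha \cdot 0 = 0$ and so cover at most $|X|-m$ points; the first $t$ levels must cover at least $m$ points, yielding an $\alpha$-approximation for the Robust instance.

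For part 2, given a $(t+1)$-\nukc instance $\mathcal{I}=((X,d),(k_1,\ldots,k_{t+1}),(r_1,\ldots,r_{t+1}))$, I would first invoke \textsc{GreedyClustering}$(X, X, r_{t+1}, 3, \mathbbm{1})$ to obtain mega-points $M$ with clusters $C(p) \subseteq B(p, 3r_{t+1})$ partitioning $X$. The key reduction is to the unweighted Robust $t$-\nukc instance $\mathcal{I}'$ on the metric $(M, d)$, with coverage requirement $m' = |M| - k_{t+1}$, budgets $(k_1, \ldots, k_t)$, and radii $(2r_1, \ldots, 2r_t)$. To show $\mathcal{I}'$ is feasible, I fix an OPT solution for $\mathcal{I}$ and use Observation~\ref{greedy:observe}(6), which gives $d(p, q) > 2r_{t+1}$ for distinct $p, q \in M$; hence each level-$(t+1)$ OPT ball (having diameter at most $2r_{t+1}$) covers at most one mega-point, so at most $k_{t+1}$ mega-points are covered exclusively at level $t+1$ and the remaining $\ge m'$ mega-points are covered by OPT at some level $i \le t$. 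For each such OPT ball $B(c, r_i)$ with $c \in X$ (not necessarily in $M$), recentering at a mega-point $p^* \in M$ it covers and doubling the radius to $2r_i$ still covers all mega-points originally inside, by the triangle inequality, producing a valid solution for $\mathcal{I}'$ with centers in $M$.

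Applying the $\beta$-approximation to $\mathcal{I}'$ yields balls of radius $2\beta r_i$ covering a set $N \subseteq M$ with $|M \setminus N| \le k_{t+1}$. I would then construct the $(t+1)$-\nukc solution by (i) expanding each level-$i$ ball $B(c, 2\beta r_i)$ to radius $2\beta r_i + 3r_{t+1}$, which covers the entire cluster $C(p)$ of every covered mega-point $p$, and (ii) placing a level-$(t+1)$ ball $B(p, 3r_{t+1})$ at each of the at most $k_{t+1}$ uncovered mega-points to cover their clusters. Since the clusters partition $X$, every point of $X$ is covered. The resulting radius at level $i \le t$ is $2\beta r_i + 3r_{t+1} \le (2\beta + 3) r_i \le (3\beta + 2) r_i$ (using $r_{t+1} \le r_i$ and $\beta \ge 1$), while level $t+1$ has radius $3 r_{t+1}$, giving a $(3\beta + 2)$-approximation overall.

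The main obstacle is choosing the radii of $\mathcal{I}'$ judiciously: the naive choice $r'_i = r_i + 3r_{t+1}$ (which suffices because every $c \in X$ lies within $3r_{t+1}$ of some mega-point) yields only a $(4\beta + 3)$ factor after the final cluster expansion, whereas the tighter choice $r'_i = 2 r_i$ keeps the factor within $3\beta + 2$. This hinges on the observation that recentering an OPT ball at a mega-point it already covers and doubling the radius preserves coverage of every other mega-point the ball originally contained, which avoids paying an extra $\beta$ on the $3r_{t+1}$ slack term.
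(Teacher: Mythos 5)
Your proof is correct and follows essentially the same route as the paper's: part 1 is the identical radius-zero trick, and part 2 reduces the $(t+1)$-\nukc instance to Robust $t$-\nukc on a subset of $X$ whose points are more than $2r_{t+1}$ apart (so each radius-$r_{t+1}$ ball hits at most one of them, giving the $k_{t+1}$ outlier budget), and maps a $\beta$-approximate solution back by expanding balls to swallow each net point's cluster. The only difference is in bookkeeping: the paper builds a $2r_{t+1}$-net and recenters each optimal ball $B(c,r_i)$ at the net point nearest $c$, using intermediate radii $r_i + 2r_{t+1}$, whereas you use \textsc{GreedyClustering} (covering radius $3r_{t+1}$) and compensate by recentering at a mega-point inside the ball and doubling to $2r_i$; both computations land at the same $3\beta+2$ factor.
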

\begin{proof}
	Note that the first claim is trivial, since an instance of Robust $t$-NUkC is a special case of NUkC, as follows. Let $\I = ((X, d), (\mathbbm{1}, m), (r_1, r_2, \ldots, r_t), (k_1, k_2, \ldots, k_t))$ be an instance of unweighted $t$-Robust-NUkC, where $m$ is the coverage requirement. Then, observe that it is equivalent to the instance $\I' = ((X, d), (r_1, r_2, \ldots, r_t, 0), (k_1, k_2, \ldots, k_t, n-m))$ of $t+1$-NUkC. An $\alpha$-approximate solution to $\I'$ immediately gives an $\alpha$-approximate solution to $\I$. We now proceed to the second claim.
	
	Consider an instance $\I = ((X, d), (r_1, r_2, \ldots, r_t, r_{t+1}), (k_1, k_2, \ldots, k_t, k_{t+1}))$ of $(t+1)$-NUkC. Note that we have to cover all points of $X$ in the instance $\I$. First, we compute a $2r_{t+1}$-net $Y$ of $X$. That is compute $Y \subseteq X$ with the following properties: (i) $d(u, v) > 2r_{t+1}$ for any $u, v \in Y$, and (ii) for any $u \in X \setminus Y$, there exists a $v \in Y$ such that $d(u, v) \le 2r_{t+1}$. Let $\varphi: X \to Y$ be a mapping that assigns every point in $X$ to its nearest point in $Y$ (breaking ties arbitrarily). Our reduction constructs the instance $\I' = ((Y, d), (\mathbbm{1}, |Y|-k_{t+1}), (k_1, k_2, \ldots, k_t), (r'_1, r'_2, \ldots, r'_t))$ of $t$-Robust-NUkC with at most $k_{t+1}$ outliers, where $r'_i = r_i + 2r_t$ for $1 \le i \le t$.
	
	We now argue that if $\I$ is feasible, then so is $\I'$.
	Fix a solution $(\B_i)_{i \in [t+1]}$ for the original instance $\I$, where $\B_i$ is a set of at most $k_i$ balls of radius $r_i$. Let $Y' \subseteq Y$ be the set of points in $Y$ covered by $(\B_i)_{i \in [t]}$, the balls of the $t$ largest radii types. For each ball $B(c_i, r_i) \in \B_i$, we add $B(\varphi(c_i), r'_i)$ to obtain the set $\B'_i$ of balls; recall $r'_i = r_i + 2r_t$. Note that the resulting solution $(\B'_i)_{i \in [t]}$ covers the set of points $Y'$. Now, let $Y'' = Y \setminus Y'$ be the set of points covered by $\B_{t+1}$, the balls of radius $r_{t+1}$. The distance between any two points of $Y$, and thus $Y''$, is greater than $2r_{t+1}$. Therefore, a ball of radius radius $r_{t+1}$ covers at most one point of $Y''$, which implies that $|Y''| \le |\B_{t+1}| \le k_{t+1}$. Thus $(\B'_i)_{i \in [t]}$ is a feasible solution for instance $\I'$, with the points in $Y''$ being the set of outliers of size at most $k_{t+1}$.
	
	We now argue that from a $\beta$-approximate solution to $\I'$, we can efficiently construct a $(3\beta + 2)$-approximate solution to $\I$. Fix a solution $(\B'_i)_{i \in [t]}$ for the instance $\I'$ that covers at least $|Y|-k_{t+1}$ points of $Y$, where $\B'_i$ consists of $k_i$ balls of radius $\beta r'_{i}$, for $1 \le i \le t$. To obtain a solution for the original instance $\I$, we proceed as follows. We expand the radius of every ball in $\B'_i$ by an additive factor of $2r_{t+1}$ to obtain $\B_i$. Note that the resulting radius for each ball in $\B_i$ is $\beta r_i + 2\beta r_{t+1} + 2r_{t+1} \le (3\beta+2) \cdot r_{i}$. Note that if a ball in solution $(\B'_i)_{i \in [t]}$ covers $y \in Y$, then the additively expanded version of the ball covers every point $x \in \varphi^{-1}(y)$. For every outlier point $y \in Y$ not covered by $(\B'_i)_{i \in [t]}$, we add a ball of radius $2r_{t+1}$ centered at $y$ to $\B_{t+1}$; this ball covers all points $x \in \varphi^{-1}(y)$. As the number of outliers is at most $k_{t+1}$, we have $|\B_{t+1}| \leq k_{t+1}$. The resulting solution $(\B_i)_{i \in [t+1]}$ covers all the points of $X$, and has approximation guarantee $3\beta+2$.
\end{proof}

\section{Setup for Robust $t$-\nukc} \label{sec:setup}
Let $\I = ((X, d), (\mathbbm{1}, m) (k_1, \ldots, k_t), (r_1, \ldots, r_t))$ be an instance of Robust $t$-NUkC. First we state the natural LP relaxation for $\I$. Recall that the goal is to cover at least $m$ points.

\begin{align*}
	\sum_{v \in X} \cov(v) &\ge m 
	\\\sum_{u \in X} x_{i, u} &\le k_i  &\forall 1 \le i \le t
	\\\cov_i(v) &= \sum_{u \in B(v, r_i)} x_{i, u} &\forall 1 \le i \le t, \forall v \in X
	\\\cov(v) &= \min \LR{\sum_{i = 1}^t \cov_i(v), 1} &\forall v \in X
	\\x_{i, u} &\ge 0 &\forall 1 \le i \le t, \forall u \in X.
\end{align*}

Let $\F$ denote the set of all tuples of subsets $(S_1, \ldots, S_t)$, where $|S_i| \le k_i$ for $1 \le i \le t$. For $v \in X$, and $1 \le i \le t$, we say that $(S_1, \ldots,  S_t) \in \F$ covers $v$ with radius $r_i$, if $d(v, S_i) \le r_i$. Let $\F_i(v) \subseteq \F$ denote the subset of solutions that cover $v$ with radius $r_i$ -- where, the sets $\F_i(v)$ of solutions are assumed to be disjoint by including a solution in $\F_i(v)$ of the smallest index $i$, if it appears in multiple such sets. 

If the instance $\I$ is feasible, then the integer hull of the coverages, $\polycov$ as given below, must be non-empty.
\begin{align*}
	\polycov:
	\\ \sum_{v \in X} \sum_{i \in [t]} \cov_i(v) &\ge m
	\\ \sum_{S \in \F_i(v)} z_S &= \cov_i(v)  &\forall i \in [t], \forall v \in X
	\\ \sum_{S \in \F} z_S &= 1 
	\\ z_S &\ge 0 &\forall S \in \F
\end{align*}

Next, we give a few definitions from \cite{ChakrabartyN21}, generalized to arbitrary $t \ge 2$, for the sake of completeness. These definitions are used in the round-or-cut framework that reduces an instance of Robust $t$-\nukc to Well-Separated Robust $t$-\nukc, as described in Section \ref{sec:t-to-robust}.

\paragraph{$t$-Firefighter Problem.}
The input is a collection of height-$t$ trees, where $L_1$ is the set of roots, and for any $v \in L_i$ with $i \ge 1$, $a_j(v)$ represents the ancestor of $v$ that belongs to $L_j$, where $1 \le j \le i$ ($a_i(v) = v$).
Furthermore, let $w: L_t \to \mathbb{N}$ be a weight function on the leaves. For a root $u \in L_1$, we use $\mathsf{Leaf}(u)$ to denote the set of leaves, i.e., nodes in $L_t$ in the tree rooted at $u$. 

Note that the $\{ \mathsf{Leaf}(u): u \in L_1 \}$ partitions $L_t$. Thus, $((L_1, \ldots, L_t), (a_1, a_2, \ldots, a_t), \mathsf{Leaf}, w)$ completely describes the structure of the tree, where $a_i(v): \bigcup_{i \le j \le t} L_j \to L_i$ is an ancestor function as defined above. Now we define the $t$-FF problem.

\begin{definition}[$t$-FF Problem] \label{def:t-ff}
	Given height-$t$ trees $( \mathcal{T} = (L_1, \ldots, L_t), (a_1, \ldots, a_t), \mathsf{Leaf}, w)$, along with budgets $(k_1, \ldots, k_t)$, we say that $T = (T_1, \ldots, T_t)$, with $T_i \subseteq L_i$ is a feasible solution, if $|T_i| \le k_i$ for $1 \le i \le t$. Let $\mathcal{C}(T) = \{ v \in L_t : a_i(v) \in T_i \text{ for some $1 \le i \le t$} \}$ be the set of leaves covered by the solution. Then, the objective is to find a feasible solution maximizing the weight of the leaves covered. This instance is represented as $\mathcal{I} = (\mathcal{T} = ((L_1, \ldots, L_t), (a_1, \ldots, a_t), \mathsf{Leaf}, w), (k_1, \ldots, k_t))$,
\end{definition}

\begin{definition}[The solution $y$] \label{def:soln-y}
	Given $\cov$, and a collection $\T$ of rooted trees, let $L_1$ denote the set of roots, and let $L_i$, $i > 1$ denote the set of vertices at $j$-th level. Furthermore, for any node $v \in L_i$ with $i > 1$, let $a_j(v)$ denote the ancestor of $v$ that belongs to $L_j$, where $1 \le j < i$. Then, the solution $y$ is defined as follows.
	$$y(v) = \begin{cases}
		\cov_1(v) & \text{ if } v \in L_1
		\\ \min \LR{ \cov_i(v), 1 - \sum_{j < i} \cov_j(a_j(v)) } & \text{ if } v \in L_i, i > 1
	\end{cases}
	$$
\end{definition}

\begin{definition}[The Sparse LP] \label{def:sparse-lp}
	\begin{align*}
		\max \sum_{v \in L_t} w(v) Y(v) &
		\\\sum_{u \in L_1} y_u &\le k_1 - t
		\\\sum_{u \in L_i} y_u &\le k_i  \qquad \forall 2 \le i \le t
		\\Y(v) \coloneqq y_v &+ \sum_{i = 1}^{t-1} y_{a_i(v)} \qquad \forall v \in L_{t}
	\end{align*}
\end{definition}

We now describe two subroutines that are used in the Reduction from Robust $t$-\nukc to Well-Separated Robust $t$-\nukc. We use the same notation and convention as in \cite{ChakrabartyN21}. These two algorithms (Algorithm \ref{alg:hs} and Algorithm \ref{alg:cgk}) are named after Hochbaum, and Shmoys \cite{hochbaumS1985best}; and Chakrabarty, Goyal, and Krishnaswamy \cite{CGK20}, respectively.

\begin{algorithm} 
	\caption{HS$(\text{Metric space} (X, d), r \ge 0, \text{ assignment } \cov: X \to \real^+)$} \label{alg:hs}
	\begin{algorithmic}[1]
		\State $R \gets 0$
		\While{$U \neq \emptyset$}
			\State $u \gets \arg\max_{v \in U} \cov(v) $
			\State $R \gets R \cup \{u\}$
			\State $\Child(u) \gets \{ v \in U : d(u, v) \le r \}$
			\State $U \gets U \setminus \Child(u)$
		\EndWhile
		\State \Return $R, \{ \Child(u) : u \in R \}$.
	\end{algorithmic}
\end{algorithm}

\begin{algorithm} 
	\caption{CGK} \label{alg:cgk}
	\begin{algorithmic}[1]
		\Statex \textbf{Input: } Robust $t$-\nukc instance $\I = ((X, d), (\omega, m), (r_1, \ldots, r_t), (k_1, \ldots, k_t))$, 
		\Statex \qquad\qquad $(\alpha_1, \ldots, \alpha_t)$, where $\alpha_i > 0$ for $1 \le i \le t$, 
		\Statex \qquad\qquad $\cov = (\cov_1, \ldots, \cov_t)$, where each $\cov_i: X \to \real^+$
		\For{$i = t$ \textbf{downto} $1$}
			\State $(L_i, \LR{\Child_i(v) : v \in L_i}) \gets $ HS$((X, d), \alpha_i r_i, \cov'_i \coloneqq \sum_{j = 1}^i \cov_j)$
		\EndFor
		\State \emph{Construct and Return a $t$-FF instance using $\{L_i, \Child_{i} \}_{1 \le i \le t}$ as described below.}
	\end{algorithmic}
\end{algorithm}

We construct the $t$-FF instance based on the sets $L_i$'s constructed, as follows. Consider some $1 \le i \le t-1$, and some $u \in L_i$. Then, for every $v \in \Child_i(u)$, we make $v$ a child of $u$ in a tree $T$. Note that $L_1$ is the set of roots of the trees constructed in this way. Then, we define $\mathsf{Leaf}(u) = \{ v \in L_t : \text{ $v$ is a leaf in the tree rooted at $v$ } \}$, and let $a_i: L_t \to \bigcup_{i \le j \le t} L_j$ be the ancestor function as defined above. Finally, for every $u \in L_t$, let $w(u) = |\mathsf{Child}_t(u)|$. Then, we return the $t$-FF instance $\mathcal{I} = (\mathcal{T} = ((L_1, \ldots, L_t), (a_1, \ldots, a_t), \mathsf{Leaf}, w), (k_1, \ldots, k_t))$.
\end{document}